\documentclass[a4paper,onecolumn,11pt,accepted=2025-10-29]{quantumarticle}
\pdfoutput=1
\pdfoutput=1
\usepackage{amsmath}
\usepackage{amsthm}
\usepackage{mathrsfs,mathbbol}
\usepackage{amssymb}
\usepackage{graphicx}
\usepackage{caption}
\usepackage{subcaption}
\usepackage[table]{xcolor}
\usepackage{enumerate}
\usepackage{xcolor}
\usepackage{hyperref}
\usepackage{physics}
\usepackage[T1]{fontenc}
\usepackage[symbol]{footmisc}
\usepackage{algorithm}
\usepackage{algpseudocode}
\usepackage[numbers,sort&compress]{natbib}
\usepackage{todonotes}
\usepackage{cancel}
\usepackage[final]{changes} % use [draft] to see changes, [final] to hide them
\definechangesauthor[name={Alice}, color=red]{alice}

\newtheorem{theorem}{Theorem}[section]

\newcommand{\bk}[1]{\left|#1\right\rangle}

\begin{document}

\title{Multipartite Entanglement Distribution in Quantum Networks using Subgraph Complementations}

\author{Aniruddha Sen}
\thanks{currently at the Department of Computer Science, University of Texas at Austin, Austin, TX, 78712, USA; Email \href{aniruddhasen@utexas.edu}{aniruddhasen@utexas.edu} }
\affiliation{College of Information and Computer Science, University of Massachusetts Amherst, 140 Governors Dr, Amherst, Massachusetts 01002, USA}
\author{Kenneth Goodenough}
\affiliation{College of Information and Computer Science, University of Massachusetts Amherst, 140 Governors Dr, Amherst, Massachusetts 01002, USA}
\author{Don Towsley}
\affiliation{College of Information and Computer Science, University of Massachusetts Amherst, 140 Governors Dr, Amherst, Massachusetts 01002, USA}

\maketitle

\begin{abstract}
     Quantum networks are important for quantum communication, enabling tasks such as quantum teleportation, quantum key distribution, quantum sensing, and quantum error correction, often utilizing graph states—a specific class of multipartite entangled states that can be represented by graphs. We propose a novel approach for distributing graph states across a quantum network. We show that the distribution of graph states can be characterized by \textit{a system of subgraph complementations}, which we also relate to the minimum rank of the underlying graph and the degree of entanglement quantified by the Schmidt-rank of the quantum state. We analyze resource usage for our algorithm and show that it improves on the number of qubits, bits for classical communication, and EPR pairs utilized, as compared to prior work. In fact, the number of local operations and resource consumption for our approach scales linearly in the number of vertices. This produces a quadratic improvement in completion time for several classes of graph states represented by dense graphs, which translates into an exponential improvement by allowing parallelization of gate operations. This leads to improved fidelities in the presence of noisy operations, as we show through simulation in the presence of noisy operations. We classify common classes of graph states, along with their optimal distribution time using subgraph complementations. We find a sequence of subgraph complementation operations to distribute an arbitrary graph state which we conjecture is close to the optimal sequence \cite{Buchanan2021SubgraphComp}, and establish upper bounds on distribution time along with providing approximate greedy algorithms.
\end{abstract}

\section{Introduction}
The distribution of graph states in quantum networks has been extensively studied \cite{hein2004multiparty, meignant_markham_grosshans_2019,fischer2021distributing}. Creating entangled graph states efficiently is important for the development of quantum networks, having applications in areas such as quantum metrology \cite{quantum_metrology}, quantum error correction \cite{Error-correctingEntanglementTree} and distributed quantum computing \cite{distributedqc}. Graph states also form the basis of the one-way model of computation for quantum computers, which is universal and operates by performing single qubit measurements on an initially created resource graph state \cite{raussendorf2001oneway}. 

In this paper, we first consider a quantum network in ideal conditions with no noisy communication and a limited number of perfect quantum memories. We address the problem of distributing a graph state under such conditions and present an algorithm to do so for a network having an arbitrary topology. We are particularly interested in gaining a better understanding of the limitations in resource consumption (of resources such as EPR pairs and number of qubits) and runtime associated with distributing graph states, and how they can be optimized. Secondly, we show that our algorithm leads to improved fidelities of the graph state in the presence of noisy operations.

\subsection{Related Work}
The distribution of arbitrary graph states in a quantum network has received attention recently. Some of the key parameters for distributing graph states are the number of Bell pairs consumed and the number of time steps required. Two approaches similar to our work (i.e.~a noiseless network with an arbitrary topology) are provided by \cite{meignant_markham_grosshans_2019,fischer2021distributing}. The authors of \cite{meignant_markham_grosshans_2019} do not optimize resource consumption for an arbitrary graph state, but provide a framework for future work on this problem by analyzing resource consumption in a noiseless and lossless network. Their algorithm utilizes $(n-1)$ Bell pairs in one time step to distribute an $n$-qubit GHZ state, and $\frac{n(n-1)}{2}$ Bell pairs in $(n-1)$ time steps to distribute an arbitrary graph state with $n$ vertices \cite{meignant_markham_grosshans_2019}. Another approach using an identical setting \cite{fischer2021distributing} utilizes $\frac{3n^2-2n}{8}$ Bell pairs and takes $(n-1)$ time steps to distribute an arbitrary graph state. Both of these approaches assume that multiple gates can be applied on a single qubit in one time step\added{, since this is negligible in comparison with the time for probabilistic Bell pair generation}. However, unlike these approaches, we consider the gate-based model of computation in which a qubit can only have one gate applied to it in one time step. We also assume that Bell pairs are distributed in advance between some central node and end nodes. This increases the distribution times of \cite{meignant_markham_grosshans_2019} and \cite{fischer2021distributing} by a factor of $n$, so they both require $O(n^2)$ time steps.

Other work \cite{avis_rozpedek_wehner_2023,cuquet2012growth,bugalho2023distributing} considers additional factors such as the probabilistic nature of generating and the time for distributing Bell pairs, accounting for memory decoherence and noisy Bell state measurements. These factors significantly impact the rate and fidelity of distributing a graph state and are of interest to practically implement such algorithms in a physical network.

\added{The algorithm provided by Avis et al. \cite{avis_rozpedek_wehner_2023}, Cuquet, Calsamiglia \cite{cuquet2012growth}, and Fischer, Towsley \cite{fischer2021distributing}, (under the assumption that the physical network has a star topology) improves on the work of \cite{meignant_markham_grosshans_2019} by proposing the following approach, \label{fntalg} which we refer to as the \textit{Factory Node + Teleportation Algorithm} or the FNT algorithm in the rest of the paper.}
\noindent\rule{\linewidth}{2pt}

\noindent\textbf{\added{FNT graph state distribution algorithm}}

\vspace*{-0.7em}
\noindent\rule{\linewidth}{1pt}

\begin{enumerate}[1.]
    \item Prepare the desired $n$-qubit graph state locally at a central node
    \item Distribute Bell pairs between the central node and each of the $n$ end nodes
    \item Teleport each qubit in the central node to each of the end nodes, consuming all Bell pairs in the process
\end{enumerate}

\noindent\rule{\linewidth}{2pt}

This algorithm is also described by Cuquet, Calsamiglia as the \textit{Bipartite A} protocol \cite{cuquet2012growth}; they additionally provide an improved algorithm called \textit{Bipartite B} which uses only $n$ qubits in the central node and fewer gates. However, they do not consider noise from local correction unitaries, which differs from our noise model as described in section \ref{sec:noisy}. We show improved experimental results for fidelity in comparison to both models. These approaches utilize $n$ Bell pairs, $O(|E|) = O(n^2)$ time steps, and require $2n$ or $n$ qubits to be stored in the central node. We improve upon each one of these factors as described in the following section.

Note that the distribution times for some of the above approaches can be improved to $O(n)$ by parallelizing gate operations. We also analyze the effect of incorporating parallelization in our approach and show a comparable speedup for distributing all graph states, and an exponential improvement in distribution time, i.e. $O(\log n)$ time steps, for certain relevant classes of graphs.

\subsection{Summary of Contributions}
We propose a novel graph state distribution algorithm both with and without parallelization and analyze it first for the noiseless case. We then examine the effects of noise through simulations on various graph states. Our algorithm is based on a specific operation on graph states, termed a \emph{subgraph complementation} ($SC$)\cite{Buchanan2021SubgraphComp}, which should not be confused with local complementation. We then show that any graph state can be distributed in the network through a sequence of $SC$ operations. Along with introducing this algorithm, we make the following contributions, improving upon prior work.

\begin{itemize}
    \item We show that our graph state distribution algorithm utilizes $(n-1)$ Bell pairs and requires at most $(n-1)$ qubits at the central node. The number of Bell pairs is clearly optimal and the number of qubits at the central node approximately scales with the average degree of the vertices in the graph.
    \item We show that our algorithm requires $O(n)$ time steps and $O(n)$ bits of classical communication to distribute sparse graphs, as well as several relevant classes of dense graphs such as complete graphs (GHZ states) and complete bipartite graphs, but may take $O(n^2)$ resources for some dense graphs.
    \item By parallelizing gate operations and using auxiliary qubits, we show that the distribution time improves to $O(\log n)$ time steps for the same classes of dense graphs. We obtain a general upper bound of $O(n\log n)$ time steps for arbitrary  dense graphs and $O(n)$ for sparse graphs. 
    \item We establish a relation between the number of $SC$ operations needed to distribute a graph state $\ket{G}$ and the entanglement of $\ket{G}$ (quantified by its average Schmidt-rank over all possible bipartitions).
    \item We also carry out simulations (with and without parallelization) of our algorithm in the presence of gate operations characterized by depolarizing noise, and show that our algorithm almost always increases fidelity compared to other approaches.
\end{itemize}

\section{Background}
Graph states are a class of multipartite entangled states that can be represented by graphs, with each vertex representing a qubit and each edge representing a $CZ$ (controlled-$Z$) gate. A graph state $\bk{G}$ corresponding to the undirected graph $G = (V,E)$ is defined as
\[\bk{G} = \prod_{(a,b) \in E}CZ_{ab}\bk{+}^{|V|}\,, \]
where $CZ_{ab}\bk{+}$ denotes an edge between two vertices $a,b \in V$, each of them initially in the $\bk{+}$ state, with $n=|V|$. Applying a $CZ$ gate twice removes an edge, i.e. $CZ^2 = I$. Note that $CZ$ gates commute with each other and are also symmetric, so edges are undirected.

We utilize the following graph operations corresponding to quantum gates or measurements to modify a graph state in this paper --- edge addition/deletion ($CZ$ gate), local complementation of a vertex ($LC$) and measurement of a qubit in either the $Y$ or $Z$ basis. An $LC$ operation on a vertex complements the induced subgraph formed by the neighborhood of a vertex, whereas a $Z$-basis measurement removes a vertex from the graph \cite{hein2004multiparty}. A $Y$-basis measurement is equivalent, up to single qubit Clifford gates, to an $LC$ operation followed by removal of the vertex \cite{hein2004multiparty}.

We also need an operation denoted as \textit{subgraph complementation} \cite{Buchanan2021SubgraphComp} (or $SC$) on some subset of vertices $V' \subseteq V$ of the graph. Subgraph complementation of $V'$ corresponds to complementing the induced subgraph on vertices $V'$ of the graph $G$. That is, for each pair of vertices $v_i,v_j \in V'$, we either add or remove the corresponding edge $(v_i,v_j)$.

An $SC$ operation is similar to an $LC$ operation since an $LC$ on a particular vertex $v$ is equivalent to an $SC$ carried out on the neighborhood of $v$.  However, $SC$ operations do not correspond to $LC$ operations since they can be on any arbitrary induced subgraph of $G$. The one exception is when the subgraph is the neighborhood of $v$ for some $v \in V$ not in the subgraph. Also, note that $SC$ operations commute.

\section{Subgraph complementation operation}
We describe the subgraph complementation operation for the setting of no noise and deterministic Bell pair generation. This section will be followed by a discussion of $SC$ operations as part of our algorithm to distribute an arbitrary graph state.
\subsection{Single subgraph complementation}
\label{SCsingle}
We first present an algorithm for performing a single $SC$ operation on any subset of nodes $V' \subseteq V$ of the network, requiring $O(|V'|)$ time steps. This will serve as a building block for our graph state distribution algorithm.
 
\added{Roughly speaking, we construct the desired graph state step by step through $SC$ operations that consist of local CZ gates (acting only on $n-1$ qubits as opposed to the $2n,n$ qubits in \cite{avis_rozpedek_wehner_2023,cuquet2012growth}) and local complementations.}

More formally, consider a network with a central node $A$ that contains $k-1$ qubits $\{a_i: 0 \le i < k-1\}$. The network also contains a set of $k-1$ nodes $\{C_i\}_{i=0}^{k-2}$,  each containing one qubit $\{c_i\}_{i=0}^{k-2}$. The qubit $a_0$ will be used as a `central' representative qubit for node $A$. Our setup assumes that $(k-1)$ Bell pairs of the form $(a_i,c_i), i = 0,\ldots, k-2$, are already distributed between the nodes. Note that this defines two network graphs corresponding to the nodes (the physical graph) and qubits (the graph state) respectively. We make no assumptions about the underlying topology or connectivity of the network, since our algorithm only requires Bell pairs to be distributed between an arbitrary central node and all the other nodes. This can be achieved starting from any connected physical graph, though the optimal method to do so may be nontrivial to find. {For the analysis in this paper we focus on a star network topology, or equivalently, we exclude all the resources and contributions to the noise required to connect the end users to the central node. We focus on this setting since a star network topology is considered to be a reasonable architecture that is implementable in the near term.}

The simplest $k$-qubit graph state we can distribute using one $SC$ operation is the $k$-GHZ state or a complete graph state on $k$ qubits (i.e. a $k-$\textit{clique}). The following algorithm describes how to distribute a $k$-clique among all the nodes $V=\{A\}\bigcup\{C_i\}_0^{k-2}$. This involves carrying out one $SC$ operation on $k-1$ nodes on $V'=\{C_i\}$ followed by measurements in $A$.  An example operation for $k=4$ can be found in Fig. \ref{fig:SC4clique}.

\begin{figure}[h]
  \centering
  \includegraphics[scale=0.25]{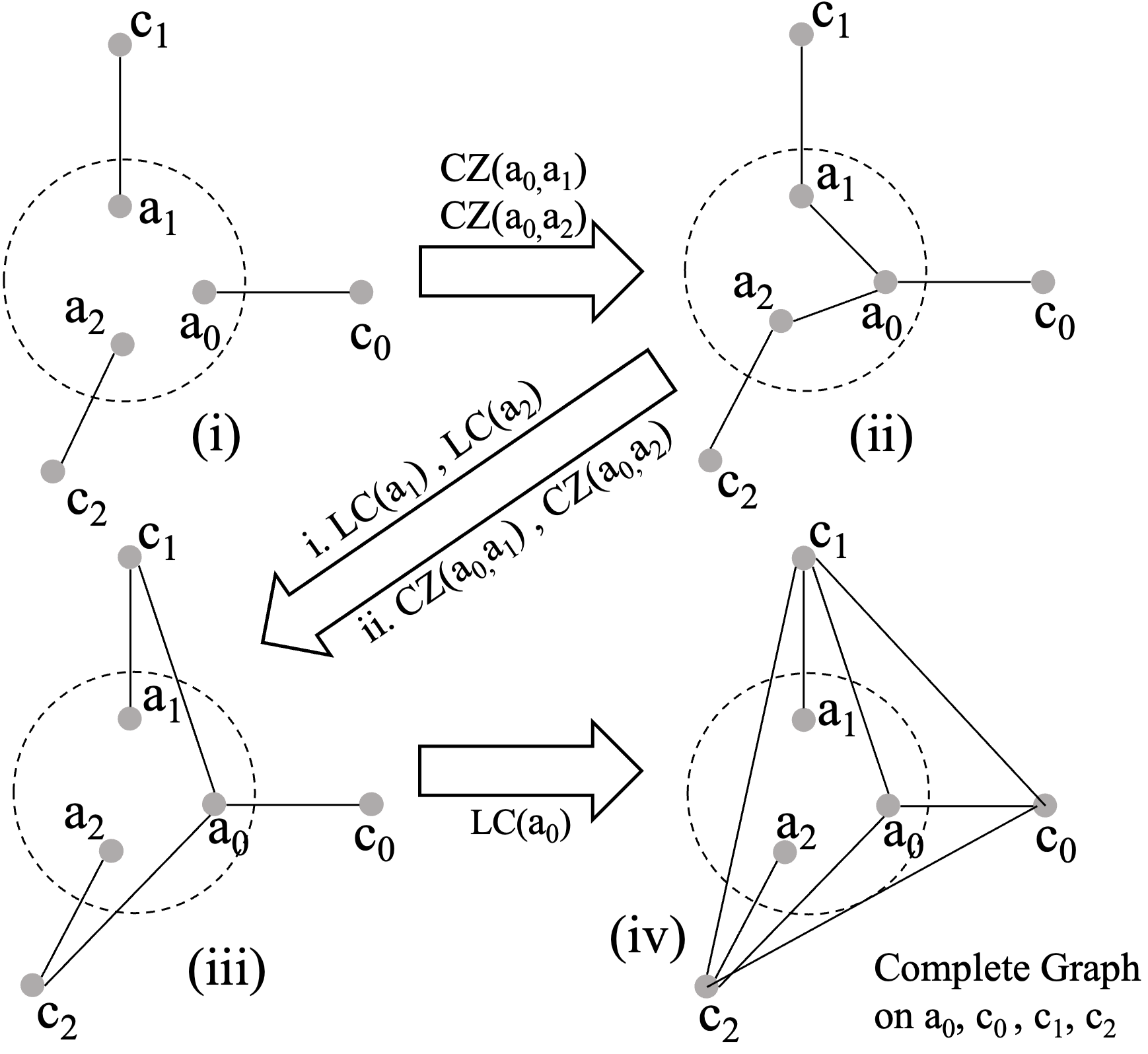}
  \caption{4-clique addition operation. The SC operation is first carried out on $V'=\{C_0,C_1,C_2\}$ containing qubits $\{c_0,c_1,c_2\}$.} If a 4-clique is the final state to be distributed, qubits $a_1$ and $a_2$ need to be measured out in the $Z$-basis at the end. Else, preserve the qubits for further $SC$ operations.
  \begin{picture}(0,0)
    \put(-110,230){A}
  \end{picture}
  \label{fig:SC4clique}
\end{figure}

\noindent\rule{\linewidth}{2pt}

\noindent\textbf{Subgraph complementation operation on $V'$}

\vspace*{-0.7em}
\noindent\rule{\linewidth}{1pt}

\begin{enumerate}[1.]
  \label{SC}
  \item $\forall a_i: i \ne 0 \implies CZ(a_i,a_0)$ --- Connect all qubits in $A$ with an edge to $a_0$. The resulting state corresponds to state (ii) in the figure for 4 qubits.
  \item $\forall a_i: i \ne 0 \implies LC(a_i)$ --- Locally complement each qubit in $A$ except $a_0$. 
  \item $\forall a_i: i \ne 0 \implies CZ(a_i,a_0)$ --- This is a repeat of the first step, except this time the edge between $a_i$ and $a_0$ is deleted. The resulting state corresponds to state (iii).
  \item $LC(a_0)$ --- The resulting state corresponds to state (iv). \added{This complements the induced graph on $V'$, which was initially an empty graph}.
\end{enumerate}
\noindent\rule{\linewidth}{2pt}

If a $k$-clique (corresponding to a $k$-qubit GHZ state) is the final state we want to distribute, then some qubits may need to be measured out in the $Z$-basis at the end (e.g. $a_1$ and $a_2$ in Fig. \ref{fig:SC4clique}). However, if we want to distribute an arbitrary graph state, then it is beneficial to actually preserve some or all connections of the form $(a_i,c_i)$  between qubits for carrying out additional $SC$ operations, in order to avoid sharing a new set of Bell pairs each time. This is described in more detail below.

\subsection{Sequence of subgraph complementations}
In order to carry out another $SC$ operation after obtaining a state similar to state (iv) in Fig. \ref{fig:SC4clique}, we need to restore direct connections $(a_i,c_i)$ between the central node and each of the end nodes, similar to the connections in state (i) in Fig. \ref{fig:SC4clique}. We refer to this as an \textit{edge reset} operation. This simply involves carrying out steps 1--3 in the subgraph complementation operation in reverse order. This is illustrated in Fig. \ref{fig:edge_reset}. The only difference between the new state (v) and state (i) is that the end nodes form a clique. The $SC$ operation can then be carried out on a new set of vertices to add another clique, but since $CZ^2 = I$, the edges of the clique get added modulo 2 (every two additions of the same edge removes it). More specifically, applying the sequence of operations 1--4 on any set of vertices $V' \subseteq V$ corresponds to performing another $SC$ operation on $V'$. This can be used to successively perform several $SC$ operations. 

Finally, after performing some sequence of $SC$ operations, we would like to select which connections (edges) to keep or not to keep between the central node $A$ and the other nodes. Since $A$ is also part of the network, simply perform step 1 followed by step 2 of the $SC$ operation to add any edges between $A$ (resp. $a_i$) and $C_i$ ($c_i$), and measure out all the other qubits. 
\begin{figure}[h]
  \centering
  \includegraphics[scale=0.45]{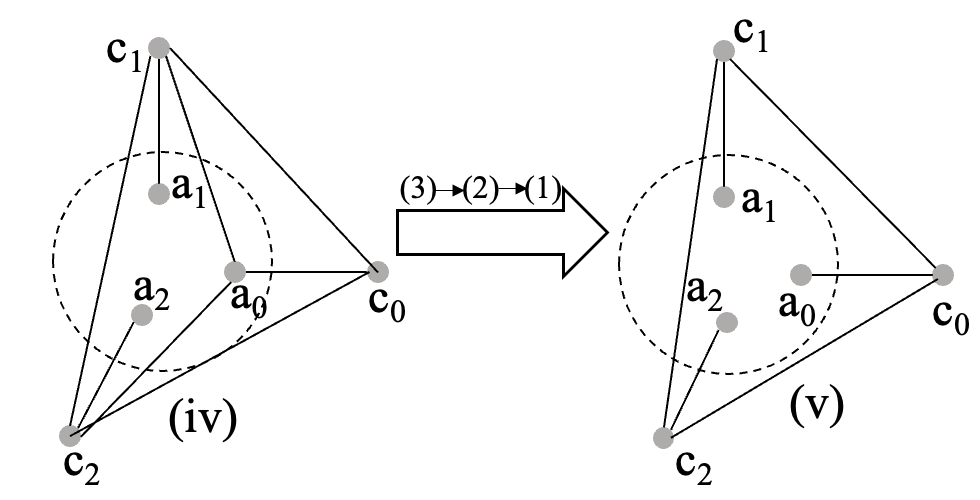}
  \caption{Edge reset operation.}
  \label{fig:edge_reset}
\end{figure}

\subsection{Analysis}
\label{sc_analysis}
In order to carry out an $SC$ operation on \added{$k'=k-1$} vertices in a network that includes the central node, we utilize \added{$k'$ Bell pairs and $k'$ qubits} in the central node. A total of \added{$(2k'-2)$ $CZ$ operations and $k'$ $LC$ operations are required by the SC operation \ref{SC}}. This is followed by up to $k'$ $Z$-measurements, up to $k'-1$ $LCs$ and $CZs$ in conjunction with each other \added{depending on the number of edge resets}. Thus, the number of time steps required to carry out a subgraph complementation followed by an edge reset is bounded by $6k'-4=O(k')$. 

\section{Graph State Distribution Algorithm}
In this section, we describe an algorithm for distributing an arbitrary graph state, which uses $SC$ operations as a primitive. Specifically, we show that any graph state can be generated through a sequence of $SC$ operations, and also analyze the distribution time.

\subsection{Subgraph complementation system}
Any arbitrary graph $G=(V,E)$ with $n$ vertices can be represented by a subgraph complementation system $\mathcal{SC}_{sys} = \{V_1, \ldots, V_d \}$ \cite{Buchanan2021SubgraphComp} where $V_i$ denotes a single $SC$ operation on $V_i\subseteq V$ and $d = |\mathcal{SC}_{sys}|$. Note that $SC$ operations can be carried out in any order, and that this decomposition is not unique. Let the subgraph complementation number $\mathbb{c}_2(G)$ denote the minimum cardinality $d$ (over all possible $\mathcal{SC}_{sys}$ for a graph $G$) of such a set \cite{Buchanan2021SubgraphComp}. For instance, consider the case of building the wheel graph $W_5$ as shown leftmost in Fig. \ref{fig:SCsys_wheel}. For one possible $\mathcal{SC}_{sys}$, all 8 edges (which are 2-cliques) are added one by one, corresponding to one $SC$ operation each, so $d = 8$. 
\begin{figure}[h]
  \centering
  \includegraphics[scale=0.3]{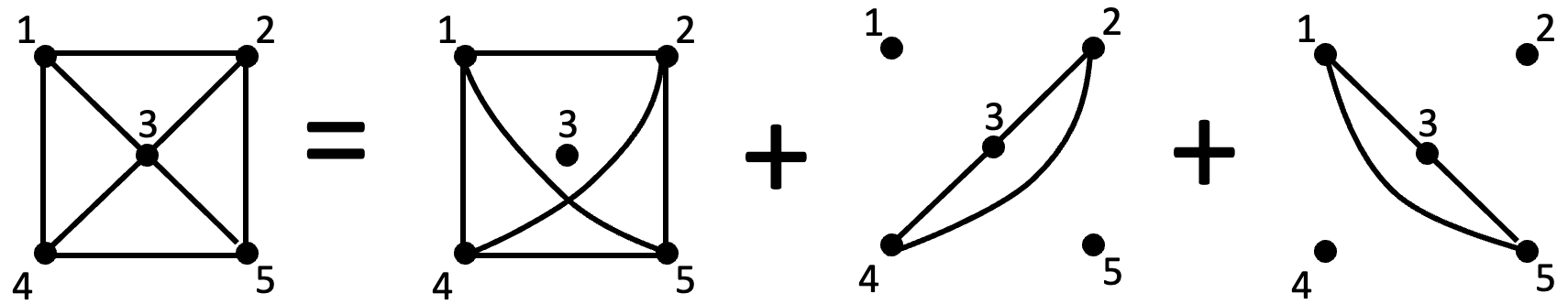}
  \caption{Subgraph complementation system for $W_5$ \cite{Buchanan2021SubgraphComp}.}
  \label{fig:SCsys_wheel}
\end{figure}
In the example shown in the figure, only three $SC$ operations (solid lines) are required and $\mathcal{SC}_{sys} = \{\{1,2,4,5\}, \{2,3,4\}, \{1,3,5\}\}$, where each subset of $\mathcal{SC}_{sys}$ contains the vertices in an $SC$ operation according to the numbering in Fig. \ref{fig:SCsys_wheel},  thus $d = 3$. In fact, we can show that this is the minimum number of $SC$ operations required to build the given graph $W_5$. Thus, $\mathbb{c}_2(W_5) = 3$.

\added{Now, we describe an algorithm for distributing a graph state $\ket{G}$ according to a fixed subgraph complementation system $\mathcal{SC}_{sys}$. We also refer to this as the $SC$ algorithm.}

\noindent\rule{\linewidth}{2pt}

\noindent{\textbf{Graph state distribution algorithm via $\mathcal{SC}_{sys}$}}

\vspace*{-0.7em}
\noindent\rule{\linewidth}{1pt}

\begin{enumerate}[1.]
  \label{SCalg}
  
  \item {\textbf{Initialization.}  
  Given a fixed $\mathcal{SC}_{sys}$ for $G$, start with a fixed set of $n-1$ Bell pairs between an arbitrary central node and all the other nodes in the network.}

  \item {\textbf{Apply subgraph complementation operations.}  
  For each $V_i \in \mathcal{SC}_{sys}$, perform the corresponding $SC$ operation on $V_i$, followed by an edge reset on $V_i$. Each operation complements a $|V_i|$-size clique for the vertices in $V_i$ in $G$. The order of the $|\mathcal{SC}_{sys}|$ operations does not affect the final graph state.}

  \item {\textbf{Last edge reset.}  
  After the last $SC$ operation, add or remove the appropriate edges between the central node and other nodes in $G$ (the first 3 steps of the $SC$ operation \ref{SCsingle} allows us to add any set of edges $(a_0,c_i)$ as in Fig.~\ref{fig:SC4clique}, and we can measure out qubit $a_i$ to remove the edge $(a_0,c_i)$).}

  \item {\textbf{Measure.}  
  Measure out all qubits in the central node that are not part of $\ket{G}$. Now, the state $\ket{G}$ corresponds to the desired graph state for $G$.}

\end{enumerate}

\noindent\rule{\linewidth}{2pt}

\subsection{Time for subgraph complementation}
For our analysis, we assume that Bell pairs are distributed in advance. We consider first the gate-based model of computation with perfectly noiseless quantum channels for communication (see Section \ref{sec:noisy} for a numerical analysis in the presence of noise). Let $T(G,\mathcal{SC}_{sys})$ denote the time required by our algorithm to distribute a graph state $\bk{G}$ according to subgraph complementation system $\mathcal{SC}_{sys}$. Denote by $t(k)$ the time required for a single $SC$ operation on $k$ vertices, $k \in \mathbb{N}$. It follows from the prior analysis section that $t(k) = \gamma k = O(k),$ for some $ \gamma \in \mathbb{R^+}$, since time is proportional to the number of $CZ$ gates performed. This directly leads to an expression for $T(G,\mathcal{SC}_{sys})$ as stated below.

\begin{theorem}
Given a graph $G$ with $n$ vertices and the corresponding system $\mathcal{SC}_{sys}$ with minimal size $\mathbb{c}_2(G)$ such that each $SC$ operation on $V_i$ is performed on $k_i$ vertices, then $T(G,\mathcal{SC}_{sys}) = \mathbb{c}_2(G) \cdot t(\overline{k})\,,$ where $\overline{k} = \frac{1}{d}\sum_{i=1}^{d}k_i$.
\label{GSDtime}
\end{theorem}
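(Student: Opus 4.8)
The plan is to reduce the claim to the linearity of the single-operation cost $t$, which was established in the preceding analysis as $t(k) = \gamma k$ for a fixed constant $\gamma \in \mathbb{R}^+$. First I would argue that, because the $SC$ operations comprising $\mathcal{SC}_{sys}$ are executed sequentially---each operation on $V_i$ preceded (for $i \ge 2$) by an edge-reset that restores the direct $(a_j,c_j)$ connections---the total runtime decomposes as a sum over the individual operations, $T(G,\mathcal{SC}_{sys}) = \sum_{i=1}^{d} \tau_i$, where $\tau_i$ is the time charged to the $i$-th operation acting on its $k_i$ vertices. The key observation is that the edge-reset is exactly steps 1--3 of the protocol run in reverse order, so it contributes only a linear-in-$k_i$ overhead that is absorbed into the same constant $\gamma$; hence $\tau_i = t(k_i) = \gamma k_i$ for every $i$.

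Granting this additive decomposition, the remaining computation is immediate:
\[
T(G,\mathcal{SC}_{sys}) = \sum_{i=1}^{d} t(k_i) = \gamma \sum_{i=1}^{d} k_i = \gamma\, d\, \overline{k},
\]
using the definition $\overline{k} = \frac{1}{d}\sum_{i=1}^{d} k_i$, equivalently $\sum_{i} k_i = d\,\overline{k}$. Regrouping gives $T = d \cdot (\gamma \overline{k}) = d \cdot t(\overline{k})$, and since $\mathcal{SC}_{sys}$ is taken to be of minimal size we have $d = \mathbb{c}_2(G)$, yielding $T(G,\mathcal{SC}_{sys}) = \mathbb{c}_2(G)\cdot t(\overline{k})$ as claimed. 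Note that it is precisely the \emph{linearity} of $t$ that makes the average $\overline{k}$ the right quantity: for a nonlinear cost, $\sum_i t(k_i)$ would not collapse to $d\,t(\overline{k})$.

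The statement is essentially a corollary of linearity, so no deep obstacle arises; the only point requiring genuine care is the first step, namely verifying that the per-operation cost is truly additive and that the constant $\gamma$ governing each operation is independent of $k_i$. In particular I would check that the final cleanup phase---selecting which $(a_i,c_i)$ edges to keep via steps 1--2, followed by the $Z$-basis measurements---remains $O(\max_i k_i)$ rather than growing superlinearly; by the analysis section this cleanup uses at most $k$ measurements together with at most $k-1$ further $CZ$ and $LC$ gates, so it stays within the same linear budget and does not disturb the clean product form $\mathbb{c}_2(G)\cdot t(\overline{k})$. If one wished to be fully rigorous, this lower-order cleanup term could either be folded into the definition of $t$ or recorded as an additive $O(\max_i k_i)$ correction.
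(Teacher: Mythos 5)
Your proposal is correct and follows essentially the same route as the paper's proof: decompose $T(G,\mathcal{SC}_{sys}) = \sum_{i=1}^{d} t(k_i)$, invoke the linearity $t(k)=\gamma k$ to collapse the sum to $d\cdot t(\overline{k})$, and use minimality of the system to identify $d = \mathbb{c}_2(G)$. Your added care in checking that edge-reset and cleanup overheads stay linear in $k_i$ is a sensible justification of the additive decomposition that the paper simply assumes from its earlier analysis section, but it does not change the argument.
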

\begin{proof}
Given a graph $G$ and an $SC$ system $\mathcal{SC}_{sys}$,
\[T(G,\mathcal{SC}_{sys}) = \sum_{i=1}^{d}t(k_i)  \,,\]
where $k_i$ denotes the number of vertices in $V_i$.
By linearity of $t$, 
\[\overline{k} = \frac{1}{d}\sum_{i=1}^{d}k_i \implies t(\overline{k}) = \frac{1}{d}\sum_{i=1}^{d}t(k_i) = \frac{1}{d}T(G,\mathcal{SC}_{sys}).\]
Also assuming that $d$ is minimum,
\[T(G,\mathcal{SC}_{sys}) = \mathbb{c}_2(G) \cdot t(\overline{k}) \,.\]
\end{proof}

It is possible that $T(G,\mathcal{SC}_{sys})$ can be minimized by some $\mathcal{SC}_{sys}$ that is not of minimum size $\mathbb{c}_2(G)$. However, the case $d = \mathbb{c}_2(G)$ provides an upper bound $T(G,\mathcal{SC}_{sys}) = \mathbb{c}_2(G) \cdot t(\overline{k})$ on distribution time, which we conjecture is a tight bound for most graph states. \added{The circuit depth or the number of local operations to distribute $\ket{G}$ according to $\mathcal{SC}_{sys}$ is bounded by $6nd$ since $\sum_{i=1}^{d}s(k_i)=\sum_{i=1}^{d}(6k_i-4) = -4d+6\sum_{i=1}^{d}k_i \le 6nd$ where $s(k_i)\le 6k_i-4$ is the circuit depth to perform an $SC$ operation and edge reset on $V_i$ (from \ref{sc_analysis}).}

Given a particular $\mathcal{SC}_{sys}$, we can also optimize the ordering. Note that a complete edge reset is not always necessary; some edges can potentially be preserved, thereby reducing both the number of required gates and the total number of time steps.

Another useful result \cite{alekseev2001orthogonal} is as follows.
\begin{theorem}
For any graph $G$ with $n$ vertices, \[\mathbb{c}_2(G) \le n-1 \,.\]  
\end{theorem}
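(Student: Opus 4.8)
The plan is to prove the bound constructively by clearing one vertex at a time, working with the \emph{reverse} process of reducing $G$ to the edgeless graph. Since every $SC$ operation is its own inverse (toggling a clique twice restores it, as $CZ^2 = I$) and, as noted in the excerpt, $SC$ operations commute, any collection of operations that transforms $G$ into the edgeless graph is exactly a collection that builds $G$ from the edgeless graph. Hence it suffices to exhibit at most $n-1$ subgraph complementations whose cumulative effect erases all edges of $G$.

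First I would fix an arbitrary ordering $v_1,\ldots,v_n$ of the vertices and process them from $v_n$ down to $v_2$. At the step for $v_i$, let $N(v_i)$ denote the neighborhood of $v_i$ in the current (partially reduced) graph; if $N(v_i)=\emptyset$ I skip it, otherwise I apply a single $SC$ on the set $\{v_i\}\cup N(v_i)$. This toggles precisely the edges lying inside that set; in particular it deletes every edge $\{v_i,w\}$ with $w\in N(v_i)$ and creates no new edge incident to $v_i$, so $v_i$ becomes isolated. This uses at most one $SC$ per vertex $v_2,\ldots,v_n$, giving at most $n-1$ operations, and $v_1$ never needs an operation of its own since a lone remaining vertex has empty neighborhood.

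The key step is the invariant that the operation clearing $v_i$ does not re-connect any already-isolated vertex. I would establish this by arguing that when $v_i$ is processed, all of $v_{i+1},\ldots,v_n$ are already isolated and hence absent from $N(v_i)$; therefore the support $\{v_i\}\cup N(v_i)$ of the $SC$ lies entirely in the prefix $\{v_1,\ldots,v_i\}$, and every subsequent operation (for $v_{i-1},\ldots,v_2$) is supported on a still-smaller prefix. Consequently a cleared vertex stays cleared, the incidental edge toggles among $N(v_i)$ only rearrange edges \emph{inside} the unprocessed prefix, and after the last step the graph is edgeless. Reusing this sequence in reverse then builds $G$, so $\mathbb{c}_2(G)\le n-1$.

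The main obstacle is phrasing this non-interference invariant cleanly, because each clearing operation also perturbs the induced subgraph on $N(v_i)$; the whole argument hinges on the observation that those perturbations are confined to lower-indexed, not-yet-processed vertices. As an alternative I would consider the linear-algebraic route suggested by the connection to minimum rank: over $\mathbb{F}_2$ an $SC$ on $S$ adds $\chi_S\chi_S^{\mathsf T}$ (off the diagonal) to the adjacency matrix $A$, so the task becomes matching the off-diagonal of $A$ by a short sum of rank-one symmetric matrices. Choosing the diagonal $D$ with entries equal to the degree parities makes the all-ones vector lie in the kernel of $A+D$, forcing $\operatorname{rank}_{\mathbb{F}_2}(A+D)\le n-1$; the caveat there is the alternating-form case, in which expressing a symmetric matrix as a sum of terms $v_i v_i^{\mathsf T}$ can cost one extra vector, which is precisely why I would prefer the elimination argument above.
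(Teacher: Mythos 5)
Your proof is correct. Note, however, that the paper itself offers no proof of this theorem: it is stated as an imported result with a citation to Alekseev--Lozin, so there is no ``paper proof'' to match against, and your argument stands on its own as a complete, elementary justification. Your elimination scheme is sound: processing $v_n$ down to $v_2$, the $SC$ on $\{v_i\}\cup N(v_i)$ isolates $v_i$ (every edge at $v_i$ lies inside the support and is toggled off, and no edge from $v_i$ to a vertex outside the support exists), and since an already-isolated $v_j$ with $j>i$ cannot lie in $N(v_i)$, each operation is supported in the prefix $\{v_1,\ldots,v_i\}$ and never disturbs cleared vertices; after the step for $v_2$ the lone vertex $v_1$ is automatically isolated, so at most $n-1$ operations suffice. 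The reversal step is also legitimate: over $\mathbb{F}_2$ each $SC$ adds a fixed symmetric toggle to the adjacency matrix, so the operations commute, are involutions, and the same family that annihilates $G$ builds $G$ from the edgeless graph; moreover your sets are pairwise distinct (the set for $v_i$ has $v_i$ as its highest-indexed element), so they genuinely form a system $\mathcal{SC}_{sys}$ of size at most $n-1$. Your linear-algebraic aside is also consistent with the paper: the degree-parity diagonal puts the all-ones vector in the kernel, and the alternating-form obstruction you flag is precisely why the paper can only assert $\mathbb{c}_2(G)=mr(G,\mathbb{F}_2)$ or $mr(G,\mathbb{F}_2)+1$, which is why your preference for the combinatorial elimination argument — which cleanly gives $n-1$ with no case split — is well judged. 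A side benefit worth noting: your proof is constructive and greedy, so it directly yields an explicit (if generally suboptimal) $\mathcal{SC}_{sys}$ usable by the paper's distribution algorithm, whereas the cited bound alone does not.
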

If $c_2(G)$ is constant with respect to some class of graphs, then the graph states corresponding to that class can be efficiently distributed in a linear number of time steps since $t(\overline{k}) \le n$  so $T(G,\mathcal{SC}_{sys}) = O(n)$ (from \ref{GSDtime}). It turns out that we can characterize the class of graphs with a constant bounded value of $c_2(G)$ \cite{Buchanan2021SubgraphComp}. Dense structured graphs often have a bounded $\mathbb{c}_2(G)$, such as cliques ($\mathbb{c}_2(G)=1$) and complete bipartite graphs $(\mathbb{c}_2(G)=3)$. Specifically, the minimum distribution time $T(G,\mathcal{SC}_{sys})$ for a clique is achieved with $\mathcal{SC}_{sys} = \{V\}$. Similarly, for complete bipartite graphs with vertex partitions $P_1, P_2$ the corresponding optimal subgraph complementation system is given by $\mathcal{SC}_{sys} = \{P_1 \cup P_2,P_1,P_2\}$.  

It is worth noting that $\mathbb{c}_2(G)$ is related to another graph attribute -- the \textit{minimum rank} \cite{Buchanan2021SubgraphComp} of $G$ over $\mathbb{F}_2$.  Let $A$ denote the adjacency matrix of a graph $G$. The minimum rank over $\mathbb{F}_2$, $mr(G,\mathbb{F}_2)$, is defined as $mr(G,\mathbb{F}_2) = min_D$ $rank(A+D)$ over all binary diagonal matrices $D$. It is known that $\mathbb{c}_2(G) = mr(G,\mathbb{F}_2)$ or $\mathbb{c}_2(G) = mr(G,\mathbb{F}_2) + 1 \cite{Buchanan2021SubgraphComp}$. 

Calculating the minimum number of time steps taken to distribute a graph state and the corresponding $\mathcal{SC}_{sys}$ appears to be a hard problem in general, but we observe some common characteristics of entanglement distribution using subgraph complementations. There is often a trade-off between $t(\overline{k})$ and $\mathbb{c}_2(G)$, for some $\mathcal{SC}_{sys}$, such that their product $T(G,\mathcal{SC}_{sys}) = O(n)$ for structured or common classes of graphs as we show in Table 1. Intuitively, if larger subgraphs are complemented at each \added{SC operation} step, then fewer total steps are needed to build an arbitrary structured graph, and vice-versa. \added{From Theorem IV.1, we define $SC_{min}^G$ to be the minimal size $SC_{sys}$ corresponding to $G$, i.e. $\mathbb{c}_2(G)$, and claim that it describes a near optimal sequence of subgraph complementations to distribute $\ket{G}$. We note that greedy algorithms based on vertex covers provide good constructive constant factor approximations for $SC_{min}^G$ \cite{Buchanan2021SubgraphComp}, and hence also for minimizing $T(G,\mathcal{SC}_{sys})$}. Table \ref{cost} lists the asymptotic costs for several relevant classes of graphs. {Note that sparse graphs (graphs with $O(n)$ edges) always have $T(G,\mathcal{SC}_{sys})=O(n)$ since each edge can be added one at a time. Whereas this is not provably true for dense graphs (graphs with $\omega(n)$ edges) in general, we list some relevant examples where we also obtain a linear bound.}

\begin{table}[h!]
\centering
\caption{Costs for distributing various graph states}
\begin{tabular}{| c | c | c | c |}
    \hline
    \bf{Class} & $\mathbf{\mathbb{c}_2(G)}$ & $\mathbf{t(\overline{k})}$  & $\mathbf{T(G,\mathcal{SC}_{sys})}$ \\
    \hline
    \rowcolor{lightgray}
    \multicolumn{1}{|l|}{Sparse Graphs:} & $O(n)$ & $O(1)$ & $O(n)$ \\
    \hline
    Path Graph & $n-1$ & $O(1)$ & $O(n)$\\
    \hline
    Cycle Graph & $n-2$ & $O(1)$ & $O(n)$\\
    \hline
    Tree Graph & $\le n-2$ & $O(1)$ & $O(n)$\\
    \hline
    \rowcolor{lightgray}
    \multicolumn{1}{|l|}{Dense Graphs:} & ... & $O(n)$ & ...  \\ 
    \hline
    Complete Graph ($n$-GHZ state) & $1$ & $O(n)$ & $O(n)$ \\
    \hline
    Repeater Graph \cite{azuma2015repeater} & $\le 3+n$ & $O(1)$ & $O(n)$\\
    \hline
    Complete $m$-partite Graph & $m+1$ & $O(n)$ & $O(n)$ \\
    \hline
\end{tabular}
\label{cost}
\end{table}

\subsection{Entanglement of the graph state}
An interesting property of $\mathbb{c}_2(G)$ is its relation to the Schmidt-rank $r_G(X)$, a common entanglement measure of a pure state. For a graph state $\bk{G}$, $r_G(X)$ is defined over some bipartition $(X,V(G)\setminus X)$ of the vertices of $G$. For a pure state, the Schmidt-rank counts the minimal number of superpositions of factorizable states needed to generate the state \cite{sperling2011schmidt} -- a higher Schmidt-rank of a state implies it has higher entanglement. Define the average Schmidt-rank $\mathbb{E}_r(G)$ to be the expectation of $r_G$ for a uniformly chosen random subset of $V$.
\begin{theorem}
For any graph $G$, $\mathbb{c}_2(G) > \mathbb{E}_r(G)\,. $   
\end{theorem}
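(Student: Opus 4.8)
The plan is to reduce the entire statement to ranks over $\mathbb{F}_2$ of submatrices of the adjacency matrix $A$ of $G$, and then chain three inequalities. First I would invoke the standard characterization of the Schmidt rank of a graph state (Hein et al.): for a bipartition $(X, V\setminus X)$ one has $r_G(X) = \mathrm{rank}_{\mathbb{F}_2}(\Gamma_X)$, where $\Gamma_X = A[X, V\setminus X]$ is the off-diagonal block of $A$ recording the edges that cross the cut. Consequently $\mathbb{E}_r(G) = 2^{-n}\sum_{X \subseteq V}\mathrm{rank}_{\mathbb{F}_2}(\Gamma_X)$, so the whole claim becomes a comparison between an average of cut-ranks and $\mathbb{c}_2(G)$.

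The key observation I would use is that for \emph{any} binary diagonal matrix $D$, the off-diagonal block $(A+D)[X, V\setminus X]$ coincides exactly with $\Gamma_X$, since $D$ only alters diagonal entries and $i \in X,\ j \notin X$ forces $i \neq j$. Because the rank of any submatrix is at most the rank of the whole matrix, this gives $r_G(X) = \mathrm{rank}_{\mathbb{F}_2}(\Gamma_X) \le \mathrm{rank}_{\mathbb{F}_2}(A+D)$ for every $X$ and every $D$. Minimizing over $D$ yields $r_G(X) \le mr(G,\mathbb{F}_2)$ for all $X$, hence $\max_X r_G(X) \le mr(G,\mathbb{F}_2)$. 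Combining this with the relation stated earlier, namely $\mathbb{c}_2(G) \ge mr(G,\mathbb{F}_2)$ (as $\mathbb{c}_2(G) \in \{mr(G,\mathbb{F}_2),\, mr(G,\mathbb{F}_2)+1\}$), I obtain $\max_X r_G(X) \le \mathbb{c}_2(G)$.

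Finally I would extract the strict inequality. Since an average never exceeds its maximum, $\mathbb{E}_r(G) \le \max_X r_G(X) \le \mathbb{c}_2(G)$. For strictness, note that the trivial bipartition $X = \emptyset$ contributes $r_G(\emptyset)=0$ (an empty cut matrix), whereas, provided $G$ has at least one edge, $\max_X r_G(X) \ge 1$. Thus not every term of the average attains the maximum, so $\mathbb{E}_r(G) < \max_X r_G(X) \le \mathbb{c}_2(G)$, which is exactly the assertion.

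The individual steps are short; the one place demanding care is the strictness argument together with the degenerate edge case of the edgeless graph, where the claim formally reduces to $0 > 0$ and must be excluded (or read as ``$G$ contains at least one edge''). The genuine conceptual crux is the single observation that the off-diagonal block of $A+D$ is independent of $D$: this is what forces the minimum rank to dominate every cut-rank simultaneously, after which everything else is bookkeeping over the $2^n$ subsets.
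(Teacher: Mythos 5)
Your proof is correct (with the caveat you note) and takes a genuinely different route from the paper. The paper's proof is citation-driven: it imports the strict inequality $\mathbb{c}_2(G) > \mathbb{E}\rho(G)$ between the subgraph complementation number and the expected cut-rank directly from the literature (Nguyen and Oum), and then merely identifies cut-rank with Schmidt-rank via Dahlberg et al.\ to translate it into the stated claim. You use the same cut-rank/Schmidt-rank identification, but you replace the imported inequality with a self-contained argument: since a binary diagonal perturbation $D$ cannot change the off-diagonal block $A[X, V\setminus X]$, every cut-rank is bounded by $\mathrm{rank}_{\mathbb{F}_2}(A+D)$ for every $D$, hence by $mr(G,\mathbb{F}_2)$, hence by $\mathbb{c}_2(G)$ using the relation $\mathbb{c}_2(G) \in \{mr(G,\mathbb{F}_2),\, mr(G,\mathbb{F}_2)+1\}$ already stated in the paper. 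This yields the pointwise bound $\max_X r_G(X) \le \mathbb{c}_2(G)$, which is strictly stronger than what the theorem asserts, and the average bound with strictness then follows from the empty bipartition contributing zero. Your approach buys three things: it is self-contained modulo facts the paper already states; it proves the stronger statement that even the \emph{maximum} Schmidt-rank over all bipartitions is at most $\mathbb{c}_2(G)$; and it makes explicit the degenerate case of the edgeless graph, where $\mathbb{c}_2(G) = \mathbb{E}_r(G) = 0$ and the strict inequality as stated fails --- a caveat that the paper's formulation ``for any graph $G$'' silently inherits from the cited result. The paper's approach, in turn, is shorter and defers the combinatorial content to the literature.
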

\begin{proof}
This can be inferred from a relation between $\mathbb{c}_2(G)$ and the cut-rank of a graph \cite{nguyen2020average}, which equals the Schmidt-rank \cite{dahlberg2018transforming}, as described below.
Define the cut-rank $\rho_G(X)$ of a set $X$ of vertices in a graph $G$ as the rank of the $X \times (V(G)\setminus X)$ matrix over the binary field whose $(i, j)$-entry is 1 if vertex $i$ in $X$ is adjacent to vertex $j$ in $V(G)\setminus X$ and 0 otherwise. 
Define the expected cut-rank $\mathbb{E}\rho(G)$ as the cut-rank for a uniformly chosen random subset $X$ of $V(G)$ \cite{nguyen2020average}. 
Now, consider the following two results:
\begin{align}\mathbb{c}_2(G) & > \mathbb{E}\rho(G) \,, \label{eq:c2erho} \\
\rho_G(X) & = r_G(X) \,, \label{eq:rhog-rg}
\end{align}
where (\ref{eq:c2erho}) comes from \cite{nguyen2020average} and (\ref{eq:rhog-rg}) from \cite{dahlberg2018transforming}. From these, we can directly conclude that $\mathbb{E}\rho(G) = \mathbb{E}_r(G)$, and thus, \[\mathbb{c}_2(G) > \mathbb{E}_r(G)\,.   \]
\end{proof}
Consequently, graph states with a high degree of entanglement will have large values of the subgraph complementation number $\mathbb{c}_2(G)$, and therefore likely to be more expensive to distribute.

\section{Detailed comparison to prior work}
Our approach utilizes $(n-1)$ EPR pairs, $O(\overline{k}n)$ bits of classical communication (CC), and a maximum of $n-1$ qubits at the central node, all of which improve on prior work \cite{meignant_markham_grosshans_2019,fischer2021distributing,avis_rozpedek_wehner_2023,cuquet2012growth}. Our approach requires one time step to distribute an arbitrary graph state, if we follow early work to implement our cost model \cite{meignant_markham_grosshans_2019,browne2011computational}, which make the assumption that local gates at multiple nodes and the same Clifford operations on a single central node commute, so all the operations can be performed in one time step. We instead assume a gate-based model of computation, which implies $T(G,\mathcal{SC}_{sys})$ number of time steps is required. Other work \cite{avis_rozpedek_wehner_2023,cuquet2012growth} utilizes an approach with a central network node that takes $O(n)$ time steps if local operations are free and $O(|E|) = O(n^2)$ time steps if not (accounting for local gates) to build a graph state corresponding to a dense graph. In the presence of memory decoherence, local operations are not free. The fidelity of the state directly depends on the number of time steps required. Our approach also utilizes $O(n^2)$ number of gates and time steps to distribute certain dense graph states. However, many common graph states, e.g., GHZ states, utilize a number of gates and time steps that is $O(n)$ since $T(G,\mathcal{SC}_{sys}) = O(n)$ for those graph states. This presents a quadratic improvement in the number of time steps, gates required, and potential fidelity of the state in the presence of noise. Table \ref{cost} includes the various costs of different classes of graphs.

\section{Incorporating Parallelization}
We first describe a log depth circuit to generate a $n$ qubit GHZ state, that is the same up to local rotations as the construction proposed by Cruz et al. \cite{cruz2019efficient}. This will be useful as a subroutine in reducing the time to create a star graph, a graph consisting of one central vertex connected with edges to all the other vertices,  as in steps 1 and 3 of the subgraph complementation operation from \ref{SCsingle}.

\subsection{Log depth GHZ state generation}
\label{log_depth}
Denote the $n$ qubits as $q_0,q_1, q_2... q_{n-1}$. Begin with a GHZ state (star graph state with center $q_0$) of a constant size of $c$ qubits s.t. $c<n/2$ which can be created in constant depth through the application of $c$ $CZ$ gates. We now present the procedure to double the size of the GHZ state in constant depth.

\noindent\rule{\linewidth}{2pt}

\noindent\textbf{GHZ state doubling algorithm}

\vspace*{-0.7em}
\noindent\rule{\linewidth}{1pt}

\begin{enumerate}[1.]
  \label{alg:ghzdoub}
  \item Consider a GHZ state $\ket{G}$ on the first $c$ qubits $q_0, q_1, \dots, q_c$ where $c<n/2$.
  \item For $i = 1$ to $c$ \textit{in parallel}, apply $CZ(q_i, q_{c+i})$.
  \item For $i = 1$ to $c$ \textit{in parallel}, apply $LC(q_i)$.
  \item For $i = 1$ to $c$ \textit{in parallel}, apply $CZ(q_i, q_{c+i})$.
  \item Now, $\ket{G}$ is a GHZ state on $2c$ qubits.
\end{enumerate}

\noindent\rule{\linewidth}{2pt}

\noindent
\textbf{Time Analysis:} Each step in the GHZ state doubling algorithm consists of $c$ operations that can be performed in parallel. Thus, each for-loop contributes only one time step, and the entire procedure takes constant depth (three time steps in total).
The above algorithm is equivalent to applying $CZ(q_0,q_{c+i}), \forall q_i: 1 \le i \le c$ but only requires three time steps (instead of $c$ time steps). Thus, the size of the GHZ state is doubled from $c$ to $2c$ in constant time. Repeated application of this algorithm clearly leads to the creation of an $n$-qubit GHZ state in $O(\log(n))$ depth. 

\subsection{Application to subgraph complementation}
The above algorithm can be utilized to reduce the times for steps 1 and 3 of the subgraph complementation operation (from \ref{SC}). We first observe that both steps induce a star graph (with the edges added modulo 2) on qubits $a_0,\ldots ,a_{k-1}$ with $a_0$ as the center.
\label{parSCalg}
Let $n$ qubits reside in the central node. Naively, $n-1$ $CZ$ gates (and time steps) are required to construct the star graph. We claim that by introducing $k$ auxiliary qubits $q_0, q_1... q_{k-1}$, $k = 1,2,\ldots ,n$, in the central node, the star graph can be generated in $O(n/k + \log(k))$ time steps. Since steps 2 and 4 of the subgraph complementation operation both take one time step, this reduces the asymptotic runtime for carrying out an $SC$ operation to the time required to add a star graph (in steps 1 and 3). The parallelized \textit{SC} operation is described as follows.
\begin{figure}[H]
  \centering
  \includegraphics[scale=0.3]{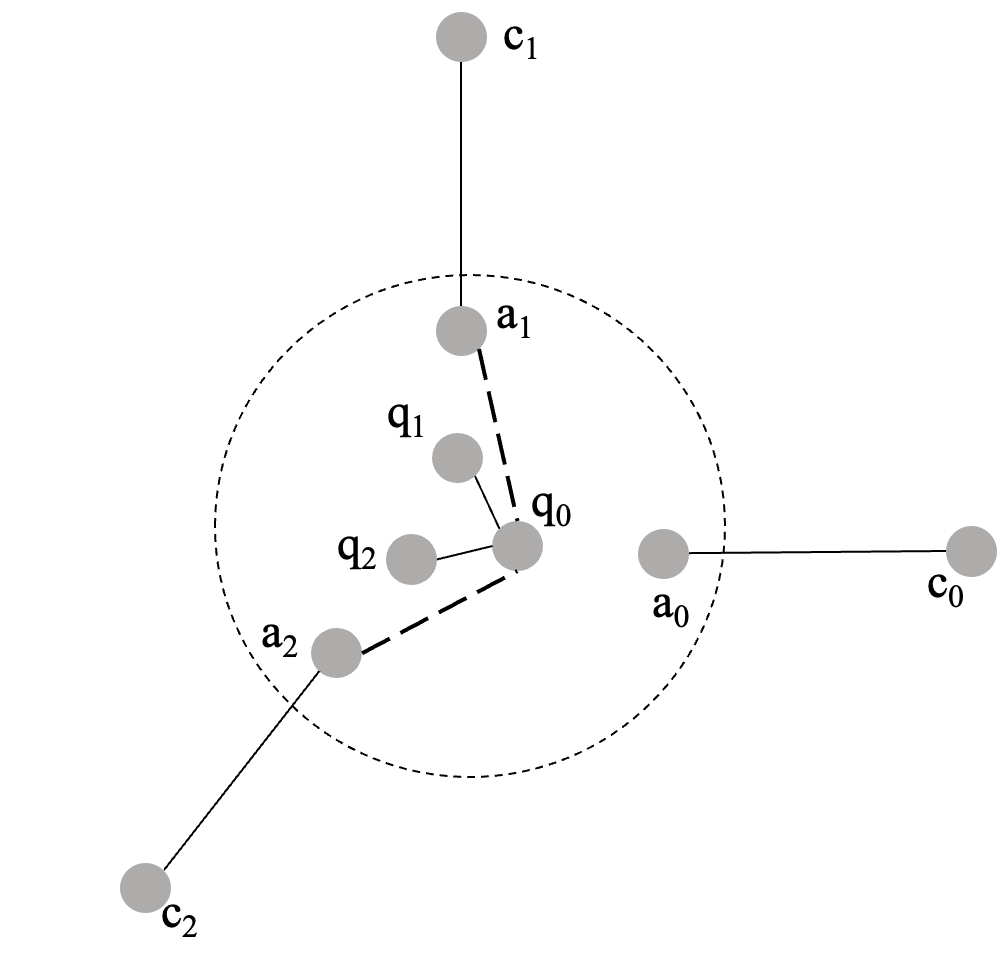}
  \caption{Parallelized subgraph complementation. The solid lines indicate the inner edges whereas the dashed lines indicate the outer edges which were added in step (c) of the parallelized $SC$ operation.}
  \label{fig:parSC}
\end{figure}
\vspace*{-1.5em}
\noindent\rule{\linewidth}{2pt}

\noindent\textbf{Parallelized \textit{SC} operation for steps 1 and 3}

\vspace*{-0.7em}
\noindent\rule{\linewidth}{1pt}

\begin{enumerate}[(a)]
    \item Consider a network with a central node $A$ that contains $n$ primary qubits $a_0,...,a_{n-1}$ and $k$ auxiliary qubits $q_0,...,q_{k-1}$. The network also contains $n-1$ nodes $C_0,...,C_{n-2}$ each containing a qubit $c_0,...,c_{n-2}$.
    \item Create a $k$-qubit GHZ state at the central node on the auxiliary qubits $q_0,...,q_{k-1}$ by repeating  the GHZ state doubling algorithm $\log(k)$ times (from \ref{alg:ghzdoub}).
    \item Add $(k-1)$ edges of the form $(q_0,a_i)$ through the GHZ state doubling algorithm (from \ref{alg:ghzdoub}) . This can be done on any subset of size $(k-1)$ of qubits $a_i$.
    \item Repeat step (b) $n/(k-1)$ times to create a star graph with $(k-1)$ \textit{inner} edges (between $q_0$ and $q_i:i\ne0$, from the initial $k$-qubit GHZ state) and $n$ \textit{outer} edges (between $q_0$ and $a_i$). Fig. \ref{fig:parSC} illustrates this construction.
    \item Remove the $(k-1)$ inner edges, similar to step (a).
\end{enumerate}

\noindent\rule{\linewidth}{2pt}

Note that step 4 of the original subgraph complementation operation (from \ref{SC}) now becomes $LC(q_0)$ instead of $LC(a_0)$.

\subsection{Analysis}
Let $T(G)$ denote the time required by the parallelized SC algorithm to distribute a graph state $\bk{G}$. Denote by $t(n)$ the time required for a single subgraph complementation on $n$ vertices, $n \in \mathbb{N}$. Steps (b) and (e) of the above operation take $O(\log n)$ time. Steps (c) and (d) take $O(n/k)$ time. The other steps in the original subgraph complementation operation (from \ref{SCsingle}) take constant time. Thus, with $(n+k)$ qubits in the central node, 
\[t(n) = O\left(\frac{n}{k} + \log(k)\right).\] 
Choosing $k = n/\log(n),$
\[t(n) = O(\log(n)).\]
Consequently, 
\begin{equation}
T(G) = \mathbb{c}_2(G) \cdot t(\overline{n}) = O(n\log(n)) \, .    
\end{equation}
\\
\noindent {This derivation has the simple corollary that is if $\mathbb{c}_2(G)=m$ for some constant $m$, then $T(G) = O(\log(n))$. In particular, this provides an exponential improvement for generating classes of graph states corresponding to such graphs when compared to direct parallelization of applying CZ gates, which is the best known method generally, and which requires $O(n)$ circuit depth \cite{cabello_danielsen_lópez-tarrida_portillo_2011}. An example of such a class of graphs is a complete $m-$partite graph, which has $\mathbb{c}_2(G)=m$ for some constant $m$. We discuss this further in section \ref{par}}.
\\

\section{Graph State Generation in the Presence of Noise} \label{sec:noisy}
In this section, we present numerical results comparing our graph state distribution algorithm to the algorithms proposed by Avis et al. \cite{avis_rozpedek_wehner_2023} and Cuquet, Calsamiglia \cite{cuquet2012growth}, which are closest to ours in terms of problem setting and resource consumption\footnote{All the code used to carry out simulations, including programs to create the graphs in this section, are publicly available at \href{https://github.com/Anisen123/graph-state-simulator}{https://github.com/Anisen123/graph-state-simulator}.}.

To carry out our simulations, we model noise through depolarizing channels. The action of a depolarizing channel on a state $\rho$ can be expressed by  \cite{nielsen2001quantum}:

\begin{equation}
\mathcal{E}(\rho) \xrightarrow{} (1-p)\rho + \frac{p}{3}(X\rho X + Y\rho Y + Z\rho Z)\, .  
\label{eq: dep_noise}
\end{equation}

This is equivalent to applying $X,Y$ or $Z$ gates on $\rho$ with equal probability $p/3$ and the identity $I$ with probability $(1-p)$. Each time an operation ($LC$ or $CZ$) is carried out on some qubit(s) $q_i$ with state $\rho_i$, we model noise by applying \ref{eq: dep_noise} to each qubit. Thus, the graph state $\bk{G}$ is transformed as follows:
\[\bk{G} \xrightarrow{} P_{i}\bk{G} \,,\]
where the Pauli operator $P_{i} \in \{X,Y,Z,I\}$  acts on the state $\rho_i$ and $P_{i}$ is selected according to probabilities described above. We can also express $P_{i}\bk{G}$ as
\[P_{i}\bk{G} = Z^b\bk{G}\,,\]
for some bitstring $b=b_1b_2...b_n$, where $Z$ is applied on the $i$'th qubit iff $b_i=1$, since the action of any Pauli operator \added{$P_i$} on $q_i$ can be mapped to $Z$ gates acting on some set of qubits in the graph state \cite{griffiths1979graph},\added{ which can include $q_i$ and its neighboring qubits. In particular, single qubit $X,Y,Z$ errors commute through $CZ$ gates in $\ket{G}$, after which the $X,Y$ errors can be converted into $I,Z$ errors respectively (since $X\ket{+}=\ket{+}$)}.  Thus, we can now express all noise acting on $\bk{G}$ as a single bitstring of $Z$ operators acting on $\bk{G}$ since $Z$ gates commute. This makes the noise easier to characterize and store in a list data structure. We estimate the fidelity by running this simulation several times (between 5000 and 10000 times) and finding the fraction of states where $b$ is the all zero bitstring. Note that the probability that $b$ is the all zero bitstring corresponds to the fidelity of the total mixed state. This is because the set of all possible states of the form $Z^b\bk{G}$ corresponding to $2^n$ possible linearly independent $Z$-bitstrings are all orthogonal to each other.

The red line in each simulation graph indicates an arbitrarily chosen fidelity of 0.75, which serves as a reference line or a visual aid to compare fidelities between different graphs.

\subsection{GHZ State Generation}
\textbf{Noise Model.} We consider noise incurred from carrying out $CZ,LC$ gate operations, measuring qubits in the $Y$ or $Z$ basis, and local Clifford correction unitaries. We assume Bell pairs that are initially noiseless. While noisy Bell pairs will impact the final fidelity, we do not expect the relative performance of the algorithms to change, since each algorithm uses roughly the same initial number of Bell pairs (in fact, our algorithm uses one less Bell pair). In Appendix \ref{appendix}, we provide simulation results that consider memory decoherence as well in Figs. \ref{fig:sim_mem1},\ref{fig:sim_mem2}. First we compare the fidelity obtained using our algorithm (from \ref{SCalg}) with that obtained using the Factory Node + Teleportation (FNT) Algorithm (from \ref{fntalg}) (or the Bipartite A protocol) and the Bipartite B protocol \cite{cuquet2012growth} to generate a 3-qubit GHZ state. Fig. \ref{fig:plot3q_comp_all} (left) compares the fidelity as a function of probability $p$ of applying a non-identity Pauli gate at each operation. Note that we implement the local creation of a GHZ state (for Bipartite A and B) by creating a star graph state first and then applying an $LC$ operation on the central node.

\begin{figure}[h]
  \begin{subfigure}{0.49\textwidth}
  \includegraphics[scale=0.5]{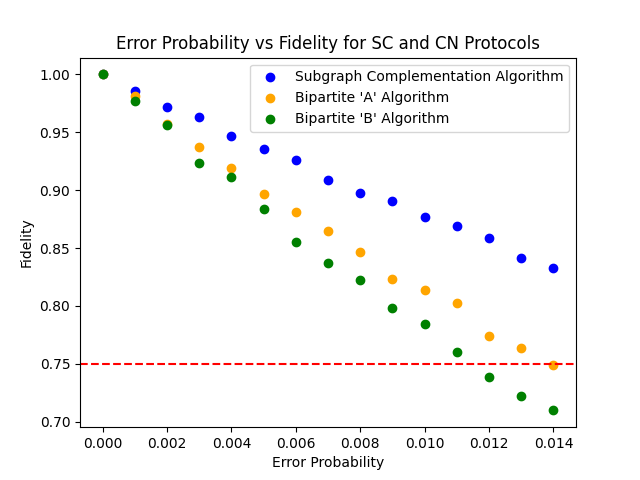}
  \end{subfigure}
  \begin{subfigure}{0.49\textwidth}
  \includegraphics[scale=0.5]{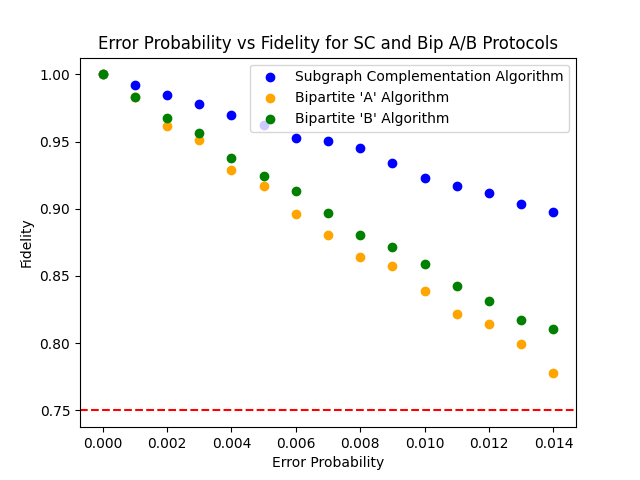}
  \end{subfigure}
    \caption{Comparison of fidelity of a 3-qubit GHZ state distributed according to the SC algorithm (\ref{SCalg}), the FNT algorithm (Bipartite A protocol) (\ref{fntalg}), and the Bipartite B protocol, considering (left) and neglecting (right) noise from local correction unitaries, having $0\le p\le 0.015$, with the results obtained over 10000 trials. The red line indicates a fidelity of 0.75.}
    \label{fig:plot3q_comp_all}
\end{figure}

We observe that our algorithm generates higher fidelities than the others. We also note that the Bipartite A protocol generates GHZ states with higher fidelity compared to the Bipartite B protocol, despite the latter using fewer gate operations \cite{cuquet2012growth}. Note that this differs from prior results reported by Cuquet and Calsamiglia \cite{cuquet2012growth}. However, we recover their results in Fig. \ref{fig:plot3q_comp_all} (right), when we neglect the noise from local correction unitaries, as was done in their paper but which we do not. We support these results by showing that the same trend holds when generating graph states with a varying number of qubits, Fig. \ref{fig:plotnq_comp_all} in Appendix \ref{appendix}. Consequently, to simplify analysis, we compare our algorithm only with the FNT Algorithm (\ref{fntalg}) (or Bipartite A) for the rest of the paper since it experimentally performs better following our noise model.

Fig. \ref{fig:sim_multqubits} shows fidelity vs. number of qubits for a fixed $p=10^{-4}$ for both the FNT and Subgraph Complementation (SC) Algorithms. We observe little or no difference between these two algorithms when the number of qubits is less than or equal to $x$, and that the SC algorithm produces higher fidelity states otherwise and that the difference increases with number of qubits.  

\begin{figure}
    \centering
    \includegraphics[scale=0.60, trim={0.43cm 0 1.2cm 1.4cm},clip]{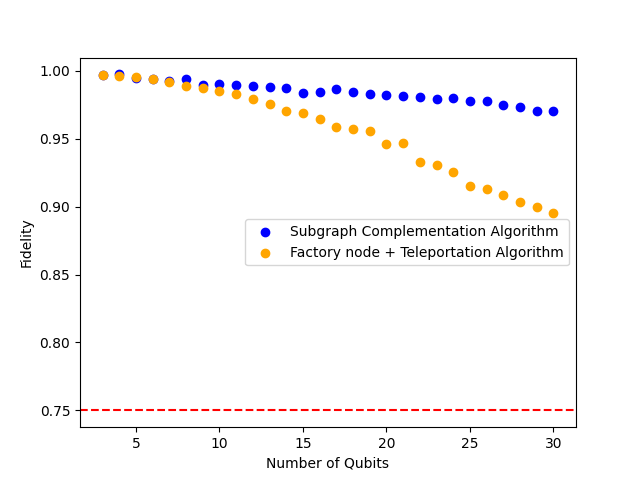}
    \caption{Comparison of fidelity of an $n$-qubit GHZ graph state distributed according to the SC algorithm (\ref{SCalg}) and the FNT algorithm \cite{avis_rozpedek_wehner_2023} varying $6\le n\le 30$ with $p = 10^{-4}$, with the results obtained over 5000 trials. The red line indicates a fidelity of 0.75.}
    \label{fig:sim_multqubits}
\end{figure}

\subsection{Complete bipartite graph state generation}
Complete bipartite graphs have applications in entanglement purification \cite{yan2023advances}, silicon-photonic quantum computing \cite{rudolph2017optimistic} and are also used as graph states for all-photonic repeaters \cite{azuma2015repeater}. Here, we consider complete bipartite graphs with equal size partitions. Fig. \ref{fig:sim_bip1} and Fig. \ref{fig:sim_bip2} compare the fidelity obtained by our algorithm with that of \added{the FNT algorithm (\ref{fntalg}) \cite{avis_rozpedek_wehner_2023,cuquet2012growth}}, for complete bipartite graph states with 6 qubits and 20 qubits respectively. Our algorithm creates complete bipartite graph states with lower fidelity when the number of qubits is small (less than 12 qubits), but with higher fidelity for larger states with 20 qubits. Fig. \ref{fig:sim_bip_mult} compares fidelity when the number of qubits is varied between 6 and 30 (with steps of 2 qubits), for error probability $p = 10^{-3}$. 

\begin{figure}[h]
  \centering
  \begin{subfigure}[h]{0.49\textwidth}
      \includegraphics[scale=0.5]{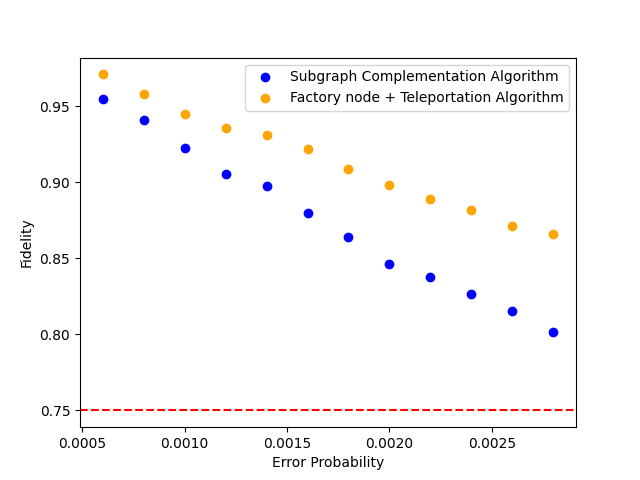}
      \caption{6-qubit complete bipartite graph state}
      \label{fig:sim_bip1}
  \end{subfigure}
  \hfill
  \begin{subfigure}[h]{0.49\textwidth}
      \includegraphics[scale=0.5]{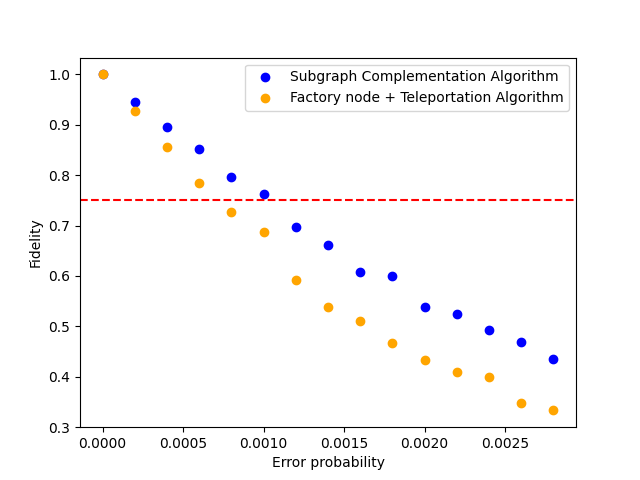}
      \caption{20-qubit complete bipartite graph state}
      \label{fig:sim_bip2}
  \end{subfigure}
  \caption{Comparison of fidelity of \textbf{(a)} 6-qubit complete bipartite graph state and \textbf{(b)} 20-qubit complete bipartite graph state, distributed according to the SC algorithm (\ref{SCalg}) and the FNT algorithm \cite{avis_rozpedek_wehner_2023} having $0\le p\le 0.003$, with the results obtained over 10000 trials for each of the graph states. The red line indicates a fidelity of 0.75.}
  \label{bip_graphs}
\end{figure}

We expect the difference in fidelity to monotonically increase in favor of the SC algorithm as the number of qubits increase. We conjecture this difference is a consequence of the $O(n)$ scaling in time and gate operations\added{, or circuit depth,} for our algorithm as compared to $O(n^2)$ for FNT (\ref{fntalg}). For a smaller number of qubits, a linear overhead due to the edge reset operation results in a higher number of operations to carry out a subgraph complementation system as compared to locally creating a graph state and teleporting all the qubits. However, as the number of qubits increases asymptotically, the fidelity to distribute a complete bipartite graph state for our algorithm should increase only linearly as compared to quadratically for FNT (\ref{fntalg}). We observe that Fig. \ref{fig:sim_bip_mult} exhibits this trend \added{with the crossover in fidelities observed at $n=12$}. Note that we set $p=10^{-3}$ in the figure as compared to $p=10^{-4}$ in Fig. \ref{fig:sim_multqubits} since at $p=10^{-4}$ complete bipartite graph states (as opposed to GHZ states) created by both algorithms are less noisy and too close in fidelity for a meaningful comparison. This is our guiding principle for choosing parameters in all figures, with comparisons expected to generalize in other parameter regimes.

\begin{figure}[h]
  \centering
  \includegraphics[scale=0.60, trim={0.43cm 0 1.2cm 1.4cm},clip]{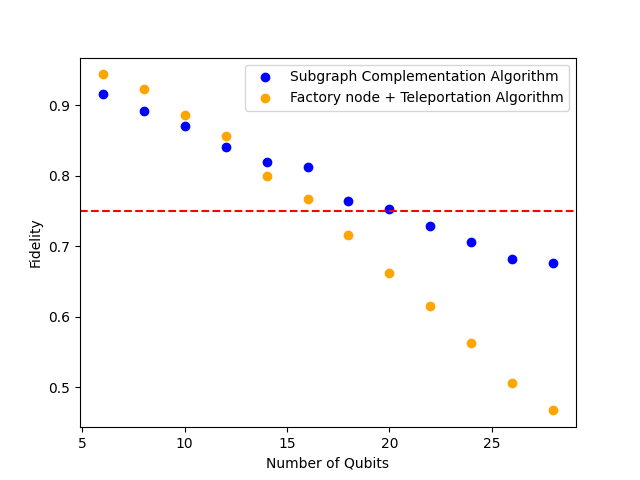}
  \caption{Comparison of fidelity of an $n$-qubit complete bipartite graph state distributed according to the SC algorithm (\ref{SCalg}) and the FNT algorithm \cite{avis_rozpedek_wehner_2023} varying $6\le n\le 30$ with $p = 10^{-3}$, with the results obtained over 5000 trials. The red line indicates a fidelity of 0.75.}
  \label{fig:sim_bip_mult}
\end{figure}

\subsection{Parallelized GHZ state generation}\label{par}
We also carried out simulations of distributing a GHZ state with parallelized gate operations as described in section VI. We compare our parallelized SC algorithm (\ref{parSCalg}) to a parallelized version of the FNT algorithm (\ref{fntalg}), in which a complete graph is created locally in the central node by parallelizing $CZ$ gates according to an edge coloring of the complete graph \cite{cabello_danielsen_lópez-tarrida_portillo_2011}. This algorithm requires $O(n)$ time steps to distribute an $n$-qubit GHZ state, however the parallelized subgraph complementation algorithm only requires $O(\log n)$ time steps. Fig. \ref{fig:sim_par_small} compares the fidelity produced by our parallelized SC algorithm with that produced by the parallelized FNT algorithm for a 6-qubit GHZ state, as a function of error probability. The parallelized SC algorithm uses 2 auxiliary qubits. 

\begin{figure}[h]
  \centering
  \centering
  \begin{subfigure}[h]{0.49\textwidth}
      \centering
      \includegraphics[scale=0.5]{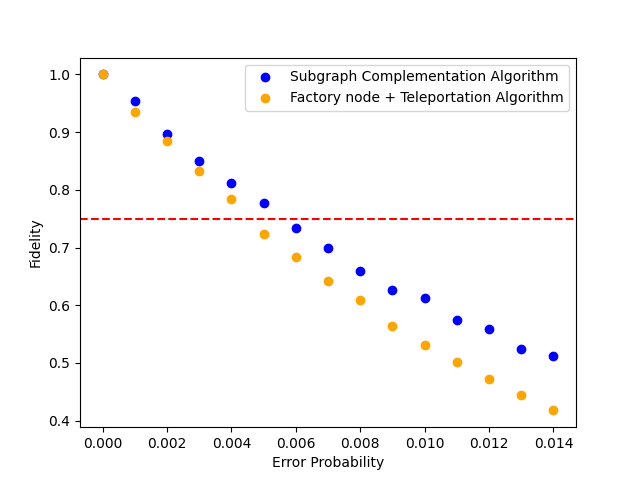}
      \caption{6-qubit GHZ state (Parallel Implementation)}
      \label{fig:sim_par_small}
  \end{subfigure}
  \hfill
  \begin{subfigure}[h]{0.49\textwidth}
      \centering
      \includegraphics[scale=0.5]{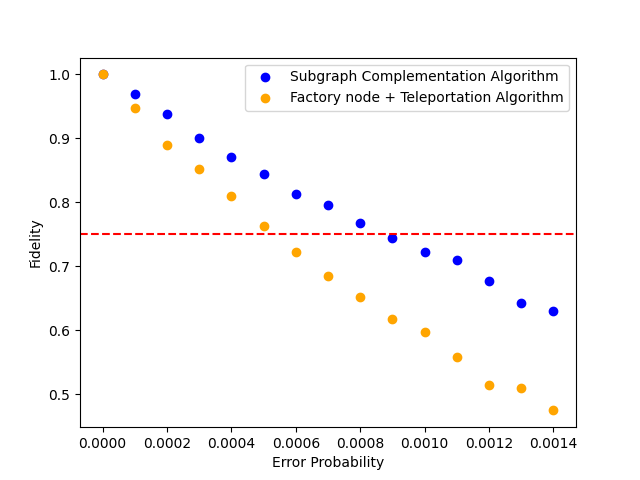}
      \caption{20-qubit GHZ state (Parallel Implementation)}
      \label{fig:sim_par_big}
  \end{subfigure}
  \caption{Comparison of fidelity of \textbf{(a)} 6-qubit GHZ state and \textbf{(b)} 20-qubit GHZ state, distributed according to the parallelized SC algorithm (with \textbf{(a)} 2 auxiliary qubits and \textbf{(b)} 4 auxiliary qubits respectively) and the FNT algorithm having $0\le p\le 0.015$, with the results obtained over 10000 trials. The red line indicates a fidelity of 0.75.}
  \label{fig:par_graphs}
\end{figure}

Our algorithm provides marginally higher fidelity in creating a 6-qubit GHZ state, however we observe in Fig. \ref{fig:sim_par_big} that this difference increases for distributing a larger GHZ state with 20 qubits having error probability of the order of $10^{-4}$.

This trend clearly shows up in Fig. \ref{fig:sim_par_multqubits}, where the number of qubits varies between 2 and 30 to create the corresponding GHZ state. 

\begin{figure}[h]
  \centering
  \includegraphics[scale=0.57, trim={0.43cm 0.2cm 1.2cm 1.4cm},clip]{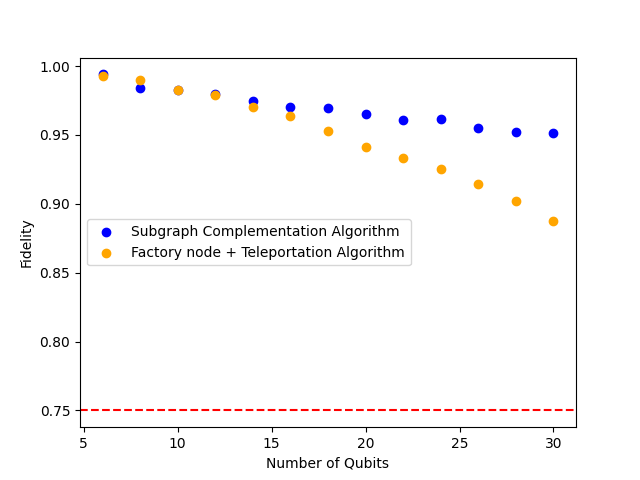}
  \caption{Comparison of fidelity of an $n$-qubit GHZ graph state distributed according to the parallelized SC algorithm and the FNT algorithm varying $6\le n\le 30$ with $p = 10^{-4}$, with the results obtained over 5000 trials. The red line indicates a fidelity of 0.75.}
  \label{fig:sim_par_multqubits}
\end{figure}

It is interesting to note that on comparing Fig. \ref{fig:sim_multqubits} and Fig. \ref{fig:sim_par_multqubits}, we observe that the fidelity obtained from distributing $n$-qubit GHZ states without parallelization is higher than with parallelization (although both outperform the parallelized FNT algorithm). This is expected since we are using $n/\log(n)$ extra qubits for the parallelized algorithm. However, we are also gaining an exponential decrease in distribution time ($O(\log n)$ with parallelization as compared to $O(n)$ without it). Therefore, this tradeoff would be favourable in most cases, especially when considering memory decoherence of the qubits which would increase proportionally with distribution time.

\section{Conclusion}
We have presented a novel and efficient algorithm for distributing an arbitrary graph state across a quantum network by drawing upon the mathematical theory of \textit{subgraph complementations}, which we show to be strongly linked to the entanglement and structure of the graph state. We show, through simulations, that this algorithm generates graph states having high fidelity in the presence of noisy operations. This algorithm also provides a natural framework to create a graph state in a distributed fashion, since subgraph complementations commute and it may be possible to implement several $SC$ operations in parallel (since their order does not matter) and merge the subgraphs at the end.

This could also be extended to a procedure for distributing arbitrary stabilizer states, since graph states are equivalent to stabilizer states up to local operations. We anticipate that the framework of subgraph complementations will be useful for other applications as well, for instance quantum error correction and secret sharing. Further work includes incorporating error correction in estimating runtime and fidelity through noisy simulations of our algorithm, considering the detailed impact of memory decoherence, as well as optimizing our algorithm for more domain-specific purposes.

\section{Acknowledgements}
This work was funded by the Army Research Office
(ARO) MURI on Quantum Network Science under grant
number W911NF2110325, NSF-ERC Center for Quantum Networks grant EEC-194158 and NSF under grant  2402861.

\bibliographystyle{plainnat}
\bibliography{references}

\begin{thebibliography}{23}
\providecommand{\natexlab}[1]{#1}
\providecommand{\url}[1]{\texttt{#1}}
\expandafter\ifx\csname urlstyle\endcsname\relax
  \providecommand{\doi}[1]{doi: #1}\else
  \providecommand{\doi}{doi: \begingroup \urlstyle{rm}\Url}\fi

\bibitem[Alekseev and Lozin(2001)]{alekseev2001orthogonal}
Vladimir~E Alekseev and Vadim~V Lozin.
\newblock On orthogonal representations of graphs.
\newblock \emph{Discrete Mathematics}, 226\penalty0 (1-3):\penalty0 359--363, 2001.
\newblock \doi{10.1016/S0012-365X(00)00168-0}.
\newblock URL \url{https://doi.org/10.1016/S0012-365X(00)00168-0}.

\bibitem[Avis et~al.(2023)Avis, Rozpędek, and Wehner]{avis_rozpedek_wehner_2023}
Guus Avis, Filip Rozpędek, and Stephanie Wehner.
\newblock Analysis of multipartite entanglement distribution using a central quantum-network node.
\newblock \emph{Physical Review A}, 107\penalty0 (1), 2023.
\newblock \doi{10.1103/physreva.107.012609}.
\newblock URL \url{https://doi.org/10.1103/physreva.107.012609}.

\bibitem[Azuma et~al.(2015)Azuma, Tamaki, and Lo]{azuma2015repeater}
Koji Azuma, Kiyoshi Tamaki, and Hoi-Kwong Lo.
\newblock All-photonic quantum repeaters.
\newblock \emph{Nature communications}, 6\penalty0 (1):\penalty0 6787, 2015.
\newblock \doi{10.1038/ncomms7787}.
\newblock URL \url{https://doi.org/10.1038/ncomms7787}.

\bibitem[Browne et~al.(2011)Browne, Kashefi, and Perdrix]{browne2011computational}
Dan Browne, Elham Kashefi, and Simon Perdrix.
\newblock Computational depth complexity of measurement-based quantum computation.
\newblock In \emph{Theory of Quantum Computation, Communication, and Cryptography: 5th Conference, TQC 2010, Revised Selected Papers 5}, pages 35--46, 2011.
\newblock \doi{10.1007/978-3-642-18073-6_4}.
\newblock URL \url{https://doi.org/10.1007/978-3-642-18073-6_4}.

\bibitem[Buchanan et~al.(2021)Buchanan, Purcell, and Rombach]{Buchanan2021SubgraphComp}
Calum Buchanan, Christopher Purcell, and Puck Rombach.
\newblock Subgraph complementation and minimum rank.
\newblock \emph{Electron. J. Comb.}, 29, 2021.
\newblock \doi{10.37236/10383}.
\newblock URL \url{https://doi.org/10.37236/10383}.

\bibitem[Bugalho et~al.(2023)Bugalho, Coutinho, Monteiro, and Omar]{bugalho2023distributing}
Lu{\'\i}s Bugalho, Bruno~C Coutinho, Francisco~A Monteiro, and Yasser Omar.
\newblock Distributing multipartite entanglement over noisy quantum networks.
\newblock \emph{quantum}, 7:\penalty0 920, 2023.
\newblock \doi{10.22331/q-2023-02-09-920}.
\newblock URL \url{https://doi.org/10.22331/q-2023-02-09-920}.

\bibitem[Cabello et~al.(2011)Cabello, Danielsen, López-Tarrida, and Portillo]{cabello_danielsen_lópez-tarrida_portillo_2011}
Adán Cabello, Lars~Eirik Danielsen, Antonio~J. López-Tarrida, and José~R. Portillo.
\newblock Optimal preparation of graph states.
\newblock \emph{Physical Review A}, 83\penalty0 (4), 2011.
\newblock \doi{10.1103/physreva.83.042314}.
\newblock URL \url{https://doi.org/10.1103/physreva.83.042314}.

\bibitem[Cirac et~al.(1999)Cirac, Ekert, Huelga, and Macchiavello]{distributedqc}
J.~I. Cirac, A.~K. Ekert, S.~F. Huelga, and C.~Macchiavello.
\newblock Distributed quantum computation over noisy channels.
\newblock \emph{Phys. Rev. A}, 59:\penalty0 4249--4254, Jun 1999.
\newblock \doi{10.1103/PhysRevA.59.4249}.
\newblock URL \url{https://doi.org/10.1103/PhysRevA.59.4249}.

\bibitem[Cruz et~al.(2019)Cruz, Fournier, Gremion, Jeannerot, Komagata, Tosic, Thiesbrummel, Chan, Macris, Dupertuis, et~al.]{cruz2019efficient}
Diogo Cruz, Romain Fournier, Fabien Gremion, Alix Jeannerot, Kenichi Komagata, Tara Tosic, Jarla Thiesbrummel, Chun~Lam Chan, Nicolas Macris, Marc-Andr{\'e} Dupertuis, et~al.
\newblock Efficient quantum algorithms for ghz and w states, and implementation on the ibm quantum computer.
\newblock \emph{Advanced Quantum Technologies}, 2\penalty0 (5-6):\penalty0 1900015, 2019.
\newblock \doi{10.1002/qute.201900015}.
\newblock URL \url{https://doi.org/10.1002/qute.201900015}.

\bibitem[Cuquet and Calsamiglia(2012)]{cuquet2012growth}
Marti Cuquet and John Calsamiglia.
\newblock Growth of graph states in quantum networks.
\newblock \emph{Physical Review A—Atomic, Molecular, and Optical Physics}, 86\penalty0 (4):\penalty0 042304, 2012.
\newblock \doi{10.1103/PhysRevA.86.042304}.
\newblock URL \url{https://doi.org/10.1103/PhysRevA.86.042304}.

\bibitem[Dahlberg and Wehner(2018)]{dahlberg2018transforming}
Axel Dahlberg and Stephanie Wehner.
\newblock Transforming graph states using single-qubit operations.
\newblock \emph{Philosophical Transactions of the Royal Society A: Mathematical, Physical and Engineering Sciences}, 376\penalty0 (2123):\penalty0 20170325, 2018.
\newblock \doi{10.1098/rsta.2017.0325}.
\newblock URL \url{https://doi.org/10.1098/rsta.2017.0325}.

\bibitem[Fischer and Towsley(2021)]{fischer2021distributing}
Alex Fischer and Don Towsley.
\newblock Distributing graph states across quantum networks.
\newblock In \emph{2021 IEEE International Conference on Quantum Computing and Engineering (QCE)}, pages 324--333. IEEE, 2021.
\newblock \doi{10.1109/QCE52317.2021.00049}.
\newblock URL \url{https://doi.org/10.1109/QCE52317.2021.00049}.

\bibitem[Griffiths()]{griffiths1979graph}
Robert~B Griffiths.
\newblock Graph states and graph codes.
\newblock \emph{arXiv preprint arXiv:0712.1979}.
\newblock \doi{10.1103/PhysRevA.78.042303}.
\newblock URL \url{https://doi.org/10.1103/PhysRevA.78.042303}.

\bibitem[Hein et~al.(2004)Hein, Eisert, and Briegel]{hein2004multiparty}
Marc Hein, Jens Eisert, and Hans~J Briegel.
\newblock Multiparty entanglement in graph states.
\newblock \emph{Physical Review A}, 69\penalty0 (6):\penalty0 062311, 2004.
\newblock \doi{10.1103/PhysRevA.69.062311}.
\newblock URL \url{https://doi.org/10.1103/PhysRevA.69.062311}.

\bibitem[Hilaire et~al.(2021)Hilaire, Barnes, Economou, and Grosshans]{Error-correctingEntanglementTree}
Paul Hilaire, Edwin Barnes, Sophia~E. Economou, and Fr\'ed\'eric Grosshans.
\newblock Error-correcting entanglement swapping using a practical logical photon encoding.
\newblock \emph{Phys. Rev. A}, 104:\penalty0 052623, Nov 2021.
\newblock \doi{10.1103/PhysRevA.104.052623}.
\newblock URL \url{https://doi.org/10.1103/PhysRevA.104.052623}.

\bibitem[Meignant et~al.(2019)Meignant, Markham, and Grosshans]{meignant_markham_grosshans_2019}
Clément Meignant, Damian Markham, and Frédéric Grosshans.
\newblock Distributing graph states over arbitrary quantum networks.
\newblock \emph{Physical Review A}, 100\penalty0 (5), 2019.
\newblock \doi{10.1103/physreva.100.052333}.
\newblock URL \url{https://doi.org/10.1103/physreva.100.052333}.

\bibitem[Nguyen and Oum(2020)]{nguyen2020average}
Huy-Tung Nguyen and Sang-il Oum.
\newblock The average cut-rank of graphs.
\newblock \emph{European Journal of Combinatorics}, 90:\penalty0 103183, 2020.
\newblock \doi{10.1016/j.ejc.2020.103183}.
\newblock URL \url{https://doi.org/10.1016/j.ejc.2020.103183}.

\bibitem[Nielsen and Chuang(2010)]{nielsen2001quantum}
Michael~A Nielsen and Isaac~L Chuang.
\newblock \emph{Quantum computation and quantum information}.
\newblock Cambridge university press, 2010.
\newblock \doi{10.1017/CBO9780511976667}.
\newblock URL \url{https://doi.org/10.1017/CBO9780511976667}.

\bibitem[Raussendorf and Briegel(2001)]{raussendorf2001oneway}
Robert Raussendorf and Hans~J Briegel.
\newblock A one-way quantum computer.
\newblock \emph{Physical review letters}, 86\penalty0 (22):\penalty0 5188, 2001.
\newblock \doi{10.1103/PhysRevLett.86.5188}.
\newblock URL \url{https://doi.org/10.1103/PhysRevLett.86.5188}.

\bibitem[Rudolph(2017)]{rudolph2017optimistic}
Terry Rudolph.
\newblock Why i am optimistic about the silicon-photonic route to quantum computing.
\newblock \emph{APL photonics}, 2\penalty0 (3), 2017.
\newblock \doi{10.1063/1.4976737}.
\newblock URL \url{https://doi.org/10.1063/1.4976737}.

\bibitem[Shettell and Markham(2020)]{quantum_metrology}
Nathan Shettell and Damian Markham.
\newblock Graph states as a resource for quantum metrology.
\newblock \emph{Phys. Rev. Lett.}, 124:\penalty0 110502, Mar 2020.
\newblock \doi{10.1103/PhysRevLett.124.110502}.
\newblock URL \url{https://doi.org/10.1103/PhysRevLett.124.110502}.

\bibitem[Sperling and Vogel(2011)]{sperling2011schmidt}
J~Sperling and W~Vogel.
\newblock The schmidt number as a universal entanglement measure.
\newblock \emph{Physica Scripta}, 83\penalty0 (4):\penalty0 045002, 2011.
\newblock \doi{10.1088/0031-8949/83/04/045002}.
\newblock URL \url{https://doi.org/10.1088/0031-8949/83/04/045002}.

\bibitem[Yan et~al.(2023)Yan, Zhou, Zhong, and Sheng]{yan2023advances}
Pei-Shun Yan, Lan Zhou, Wei Zhong, and Yu-Bo Sheng.
\newblock Advances in quantum entanglement purification.
\newblock \emph{Science China Physics, Mechanics \& Astronomy}, 66\penalty0 (5):\penalty0 250301, 2023.
\newblock \doi{10.1007/s11433-022-2065-x}.
\newblock URL \url{https://doi.org/10.1007/s11433-022-2065-x}.

\end{thebibliography}

\appendix
\section{Additional simulation results}
\label{appendix}

\begin{figure}[h]
  \centering
  \begin{subfigure}{0.49\textwidth}
  \includegraphics[scale=0.50, trim={0.43cm 0 1.2cm 1.4cm},clip]{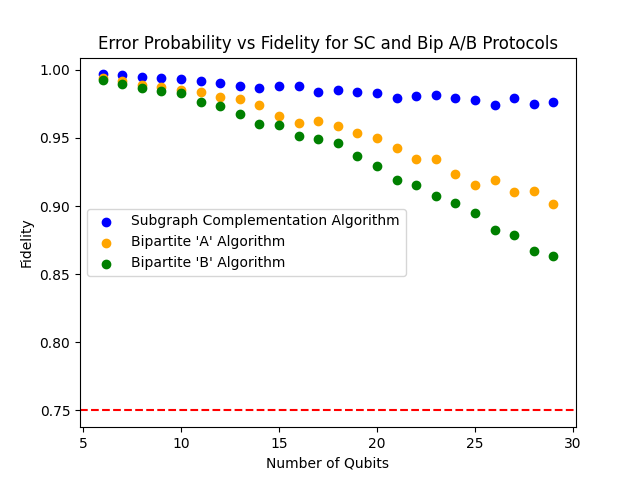}
  \end{subfigure}
  \begin{subfigure}{0.49\textwidth}
  \includegraphics[scale=0.50, trim={0.43cm 0 1.2cm 1.4cm},clip]{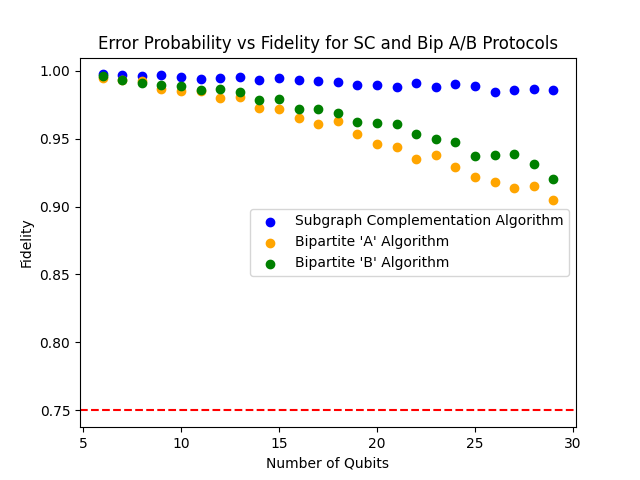}
  \end{subfigure}
    \caption{Comparison of fidelity of an $n$-qubit GHZ graph state distributed according to the SC algorithm (\ref{SCalg}), the FNT algorithm (Bipartite A protocol) (\ref{fntalg}), and the Bipartite B protocol, considering (top) and neglecting (bottom) noise from local correction unitaries, varying $6\le n\le 30$ with $p = 10^{-3}$, with the results obtained over 5000 trials. The red line indicates a fidelity of 0.75.}
    \label{fig:plotnq_comp_all}
\end{figure}

In this appendix, we provide some additional simulation results. Figure \ref{fig:plotnq_comp_all} shows the fidelity of $n$-qubit GHZ graph states generated using three different algorithms: the SC algorithm, the FNT algorithm (Bipartite A), and the Bipartite B protocol, for $6 \leq n \leq 30$ with error probability $p = 10^{-3}$. The top plot includes noise from local correction unitaries, while the bottom plot neglects this noise. We observe that the SC algorithm consistently achieves higher fidelity than the other algorithms across all $n$ values, even when local correction noise is accounted for. While the Bipartite B protocol requires fewer gate operations, the Bipartite A protocol outperforms it in fidelity, diverging from earlier results reported by Cuquet and Calsamiglia \cite{cuquet2012growth}. Interestingly, when local correction noise is excluded (bottom plot), we recover those earlier trends, showing improved performance of Bipartite B, consistent with assumptions made in prior work. These findings highlight the significance of accounting for realistic noise sources and support the robustness of the SC algorithm under more comprehensive noise models.

\begin{figure}[h]
  \centering
  \begin{subfigure}{0.49\textwidth}
  \includegraphics[scale=0.50, trim={0.43cm 0 1.2cm 1.4cm},clip]{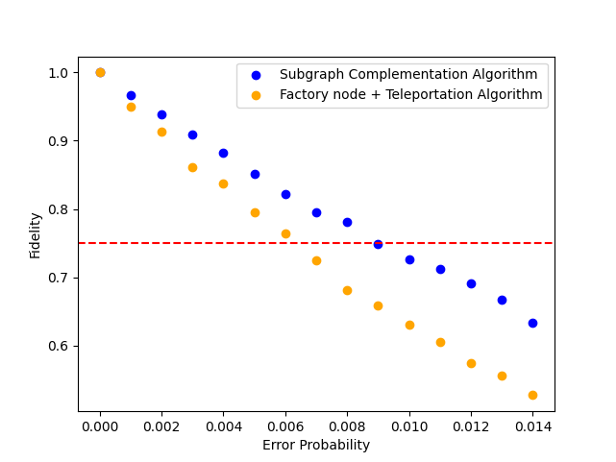}
  \end{subfigure}
  \begin{subfigure}{0.49\textwidth}
  \includegraphics[scale=0.50, trim={0.43cm 0 1.2cm 1.4cm},clip]{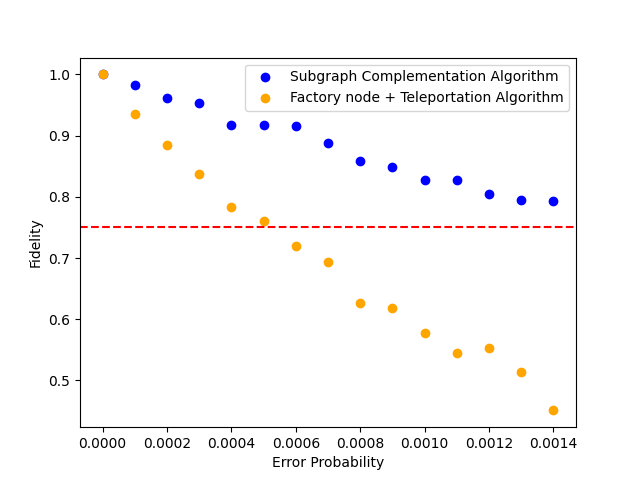}
  \end{subfigure}
  \caption{Comparison of fidelity of a 5-qubit (top) and 20-qubit (bottom) GHZ state distributed according to the SC algorithm (\ref{SCalg}) and the FNT algorithm \cite{avis_rozpedek_wehner_2023} having $0\le p\le 0.0015$, with the results obtained over 10000 trials. The red line indicates a fidelity of 0.75.}
  \label{fig:sim_big}
\end{figure}

Figure \ref{fig:sim_big} compares the fidelity of GHZ states of 5 qubits (top) and 20 qubits (bottom) generated using the SC and FNT algorithms as a function of the noise parameter $0 \leq p \leq 0.0015$. In both cases, the SC algorithm outperforms the FNT algorithm across the entire range of noise values. Notably, the gap in fidelity between the two algorithms becomes more pronounced in the 20-qubit scenario.

\begin{figure}[h]          
  \centering
  \includegraphics[scale=0.60, trim={0.43cm 0 1.2cm 1.4cm},clip]{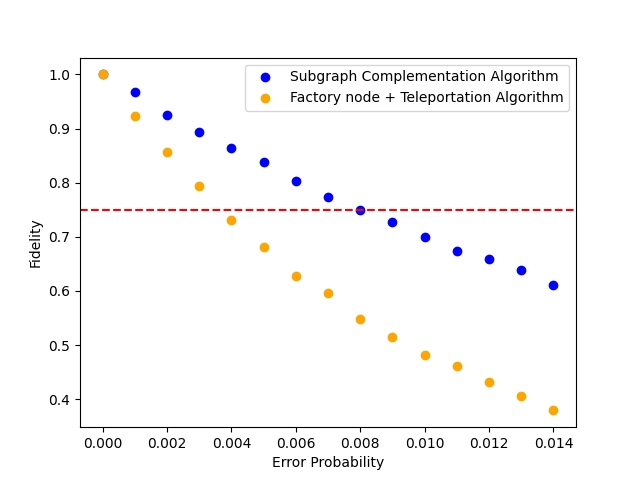}
  \caption{(Simulation considers noisy gate operations and memory decoherence) Comparison of fidelity of a 3-qubit GHZ state distributed according to the SC algorithm (\ref{SCalg}) and the FNT algorithm \cite{avis_rozpedek_wehner_2023} having $0\le p\le 0.0015$, with the results obtained over 10000 trials. The red line indicates a fidelity of 0.75.}
  \label{fig:sim_mem1} 
\end{figure}

Figures~\ref{fig:sim_mem1} and \ref{fig:sim_mem2} present the fidelity of GHZ states distributed using the SC and FNT algorithms under a more realistic noise model that includes both gate errors and memory decoherence. In Figure~\ref{fig:sim_mem1}, we consider a 3-qubit GHZ state with varying noise levels $0 \leq p \leq 0.0015$, while in Figure~\ref{fig:sim_mem2}, we examine $n$-qubit GHZ graph states for $6 \leq n \leq 30$ with a fixed error rate of $p = 10^{-4}$. In both scenarios, the SC algorithm consistently outperforms the FNT algorithm, though the difference does not seem to increase significantly with increasing number of qubits, unlike in other cases which did not consider memory decoherence. We conjecture this might be a consequence of a larger scaling being required with respect to the number of qubits in this case, since the noise dominates with a smaller number of qubits.

\begin{figure}[h]
  \centering
  \includegraphics[scale=0.60, trim={0.43cm 0 1.2cm 1.4cm},clip]{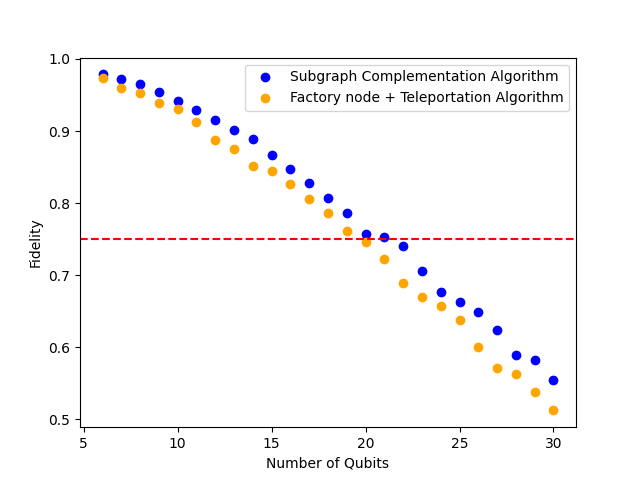}
  \caption{(Simulation considers noisy gate operations and memory decoherence) Comparison of fidelity of an $n$-qubit GHZ graph state distributed according to the SC algorithm (\ref{SCalg}) and the FNT algorithm \cite{avis_rozpedek_wehner_2023} varying $6\le n\le 30$ with $p = 10^{-4}$, with the results obtained over 5000 trials. The red line indicates a fidelity of 0.75.}
  \label{fig:sim_mem2}
\end{figure}

\end{document}